\newtheorem{theorem}{Theorem}
\newtheorem{lemma}{Lemma}
\newtheorem{corollary}{Corollary}
\newtheorem{proposition}{Proposition}
\newtheorem{observation}{Observation}
\newcommand{\old}[1]{{}}
\newcommand{\later}[1]{{}}
\newcommand{\eps}{\varepsilon}
\newcommand{\NN}{\mathbb{N}}
\newcommand{\RR}{\mathbb{R}}
\def\B{\mathcal B}
\def\H{\mathcal H}
\def\I{\mathcal I}
\def\X{\mathcal X}
\newcommand{\vol}{{\rm vol}}
\title{On the largest empty axis-parallel box amidst $n$ points}
\author{%
Adrian Dumitrescu\thanks{Department of Computer Science,
University of Wisconsin--Milwaukee,
WI 53201-0784, USA\@. Email: \texttt{ad@cs.uwm.edu}.
Supported in part by NSF CAREER grant CCF-0444188.
Part of the research by this author was done at the
Ecole Polytechnique F\'ed\'erale de Lausanne.}
\and
Minghui Jiang\thanks{Department of Computer Science, 
Utah State University, Logan, UT 84322-4205, USA\@.
Email: \texttt{mjiang@cc.usu.edu}.
Supported in part by NSF grant DBI-0743670.}}
\begin{document}

\maketitle

\thispagestyle{empty}

\begin{abstract}
We give the first nontrivial upper and lower bounds on the maximum volume
of an empty axis-parallel box inside an axis-parallel unit hypercube
in $\RR^d$ containing $n$ points. 
For a fixed $d$, we show that the maximum volume is of the order
$\Theta\left(\frac{1}{n}\right)$.  We then  use the fact that the maximum volume is
$\Omega\left(\frac{1}{n}\right)$ in our design of the first efficient
$(1-\eps)$-approximation algorithm for the following problem: 
Given an axis-parallel $d$-dimensional box $R$ in $\RR^d$
containing $n$ points, compute a maximum-volume empty axis-parallel
$d$-dimensional box contained in $R$. The running time of our
algorithm is nearly linear in $n$, for small $d$, and increases only
by an $O(\log{n})$ factor when one goes up one dimension. 
No previous efficient exact or approximation algorithms were known for this
problem for $d \geq 4$. As the problem has been recently shown to be
NP-hard in arbitrary high dimensions (i.e., when $d$ is part of the
input), the existence of efficient exact algorithms is unlikely.

We also obtain tight estimates on the maximum volume of an empty
axis-parallel hypercube inside an axis-parallel unit hypercube
in $\RR^d$ containing $n$ points. For a fixed $d$, 
this maximum volume is of the same order order $\Theta\left(\frac{1}{n}\right)$.  
A faster $(1-\eps)$-approximation algorithm, with a  milder dependence
on $d$ in the running time, is obtained in this case.  
\end{abstract}

\medskip
\hspace{0.15in}
\textbf{\small Keywords}:
Largest empty box, largest empty hypercube, approximation algorithm.

\newpage
\setcounter{page}{1}
\setcounter{footnote}{0}

\section{Introduction}

Given a set $S$ of $n$ points in the unit square $U=[0,1]^2$,
let $A(S)$ be the maximum area of an empty axis-parallel rectangle
contained in $U$, and let $A(n)$ be the minimum value
of $A(S)$ over all sets $S$ of $n$ points in $U$.
For any dimension $d \geq 2$, given a set $S$ of $n$ points in the unit 
hypercube $U_d=[0,1]^d$, let $A_d(S)$ as the maximum volume of an
empty axis-parallel hyperrectangle ($d$-dimensional axis-parallel box)
contained in $U_d$, and let $A_d(n)$ be the minimum value of $A_d(S)$
over all sets $S$ of $n$ points in $U_d$.
For simplicity we sometimes omit the subscript $d$ in the planar case ($d=2$). 

In this paper we give the first nontrivial upper and lower bounds on $A_d(n)$.
For any dimension $d$, our estimates are within a multiplicative
constant (depending on $d$) from each other. For a fixed $d$, we show
that the maximum volume is of the order $\Theta(\frac{1}{n})$.   
While the algorithmic problem of finding an empty axis-parallel box
of maximum volume has been previously studied for $d=2,3$ (see below),
estimating the maximum volume of such a box as a function of $d$ and
$n$ seems to have not been previously considered. 

We first introduce some notations and definitions. Throughout this paper,
a \emph{box} is an \emph{open} axis-parallel hyperrectangle
contained in the unit hypercube $U_d=[0,1]^d$, $d \ge 2$.
Given a set $S$ of points in $U_d$,
a box $B$ is \emph{empty} if it contains no points in $S$,
i.e., $B\cap S = \emptyset$.
If $B$ is a box, we also refer to the side length of $B$ in the $i$th
coordinate as the extent in the $i$th coordinate of $B$. 
Throughout this paper, $\log{n}$ and $\ln{n}$ are the logarithms of $n$ in
base $2$ and $e$, respectively.

Given an axis-parallel rectangle $R$ in the plane containing $n$ points,
the problem of computing a maximum-area empty axis-parallel
sub-rectangle contained in $R$ is one of the oldest problems studied 
in computational geometry. For instance, this problem arises 
when a rectangular shaped facility is to be located within a
similar region which has a number of forbidden areas, or in cutting
out a rectangular piece from a large similarly shaped metal sheet with
some defective spots to be avoided~\cite{NLH84}. In higher dimensions,
finding the largest empty axis-parallel box has applications
in data mining, in finding large gaps in a multi-dimensional data
set~\cite{EGLM03}. 

Several algorithms have been proposed for the planar problem
over the years \cite{AS87,AF86,AK89,CDL86,D92,MOS85,NLH84,O90}. 
For instance, an early algorithm by Chazelle, Drysdale and Lee
\cite{CDL86} runs in $O(n \log^3{n})$ time and $O(n \log{n})$ space.
The fastest known algorithm,
proposed by Aggarwal and Suri in 1987 \cite{AS87},
runs in $O(n \log^2{n})$ time and $O(n)$ space. 
A lower bound of $\Omega(n \log{n})$ in the algebraic decision tree
model for this problem has been shown by Mckenna et~al.\ \cite{MOS85}.

For any dimension $d$, there is an obvious brute-force algorithm
running in $O(n^{2d+1})$ time and $O(n)$ space. 
No significantly faster algorithms, i.e., with a fixed degree
polynomial running time in $\RR^d$, where known.
Confirming this state of affairs,
Backer and Keil~\cite{BK09a,BK09b} recently proved that the problem is
NP-hard in arbitrary high dimensions (i.e., when $d$ is part of the input).
They also gave an exact algorithm running in $O(n^d \log^{d-2}{n})$
time, for any $d \geq 3$.  
In particular, the running time of their exact algorithm for $d=3$ is
$O(n^3 \log{n})$. Previously, Datta and Soundaralakshmi~\cite{DS00}
had reported an $O(n^3)$ time exact algorithm for the $d=3$ case, but
their analysis for the running time seems incomplete. Specifically,
the $O(n^3)$ running time depends on an $O(n^3)$ upper bound on the
number of maximal empty boxes (see discussions in the next paragraph),
but they only gave an $\Omega(n^3)$ lower bound.
Here we present the first efficient $(1-\eps)$-approximation algorithm 
for finding an axis-parallel empty box of maximum volume, 
whose running time is nearly linear for small $d$, and increases only
by an $O(\log{n})$ factor when one goes up one dimension. 

An empty box of maximum volume must be maximal with respect to inclusion. 
Following the terminology in \cite{NLH84},
a {\em maximal empty} box is called {\em restricted}. 
Thus the maximum-volume empty box in $U_d$ is restricted.
Naamad et~al.\ \cite{NLH84} have shown that in the plane, the number of 
restricted rectangles is $O(n^2)$, and that this bound is tight.
It was conjectured by Datta and Soundaralakshmi~\cite{DS00} that the
maximum number of restricted boxes is $O(n^{d})$ for each (fixed) $d$.  
The conjecture has been recently confirmed by Backer and
Keil~\cite{BK09a,BK09b} (for $d \geq 3$). 
Here we extend (Theorem~\ref{thm:restricted}, Appendix~\ref{sec:restricted})
the constructions with $\Omega(n^d)$ restricted boxes for $d=2$
in \cite{NLH84} and $d=3$ in \cite{DS00} for arbitrary $d$. 
Independently and simultaneously, Backer and Keil have also obtained
this result~\cite{Ba09,BK09a,BK09b}. Hence the maximum number of restricted
boxes is $\Theta(n^{d})$ for each fixed $d$.   
This means that any algorithm for computing a maximum-volume empty box 
based on enumerating restricted boxes is inefficient in the worst case.
On the other hand, at the expense of giving an $(1-\eps)$-approximation, 
our algorithm does not enumerate all restricted boxes, and
achieves efficiency by enumerating all canonical boxes (to be defined)
instead.

%\vspace{-0.5\baselineskip}
\paragraph{Our results are:}\quad
\begin{description}
\item[(I)] In Section~\ref{sec:rectangle} we show 
that $A_d(n)=\Theta\left(\frac{1}{n}\right)$ for $d \geq 2$. More precisely: 
$A_d(n) \geq \frac{1}{n+1}$, and
$A_d(n) \geq \left(\frac{5}{4}-o(1)\right) \cdot \frac{1}{n}$. 
From the other direction we have $A_2(n) < 4\cdot \frac{1}{n}$, and
$A_d(n) < (2^{d-1} \prod_{i=1}^{d-1} p_i) \cdot \frac{1}{n}$ for any
$d \geq 3$. Here $p_i$ is the $i$th prime. 

\item[(II)] In Section~\ref{sec:approx1} we exploit the fact that
the maximum volume is $\Omega\left(\frac{1}{n}\right)$ in our design
of the first efficient 
$(1-\eps)$-approximation algorithm for finding the largest empty box:
Given an axis-parallel $d$-dimensional box $R$ in $\RR^d$ containing
$n$ points, there is a $(1-\eps)$-approximation algorithm, running in  \linebreak
$O((8 e d \eps^{-2})^d \cdot n \log^d{n})$ time, 
for computing a maximum-volume empty axis-parallel box
contained in $R$. 

\item[(III)] In Appendix~\ref{sec:square} we show that the
$\Theta\left(\frac{1}{n}\right)$  estimate also holds for the maximum volume
(or area) of an axis-aligned hypercube (or square) amidst $n$ point in $[0,1]^d$. 
In Appendix~\ref{sec:approx2} we present a faster
$(1-\eps)$-approximation algorithm for finding the largest empty
hypercube: Given an axis-parallel $d$-dimensional hypercube $R$ in $\RR^d$ containing
$n$ points, there is a $(1-\eps)$-approximation algorithm, running in  
$ O( d^2 \eps^{-1} \cdot n \log{n} +
( 4d \eps^{-1})^{d+1} \cdot n^{1/d} \log{n}) $ time, 
for computing a maximum-volume empty axis-parallel hypercube
contained in $R$. 

\item[(IV)] In Appendix~\ref{sec:restricted} we derive
an $\Omega(n^d)$ lower bound on the number of restricted boxes in
$d$-space, for fixed $d$. This matches the recent $O(n^d)$ upper bound
of Backer and Keil~\cite{BK09a,BK09b}. Following their idea, we further
narrow the gap between the bounds (in the dependence of $d$) 
based on a finer estimation. 

\end{description}

\section{Empty rectangles and boxes}\label{sec:rectangle}

\subsection{Empty rectangles in the plane}\label{subsec:plane}

%\vspace{-0.5\baselineskip}
\paragraph{The lower bound.}
We start with a very simple-minded lower bound; however, as it turns
out, it is very close to optimal.
One can immediately see that $A(n) =\Omega(\frac{1}{n})$, by
partitioning the unit square with vertical lines through each point:
out of at most $n+1$ resulting empty rectangles, the largest rectangle
has area at least $\frac{1}{n+1}$. Thus we have:

\begin{proposition} \label{P1}
\begin{equation} \label{E3}
A(n) \geq \frac{1}{n+1}. 
\end{equation}
\end{proposition}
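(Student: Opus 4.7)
The plan is to give a direct pigeonhole argument, essentially formalizing the sketch already provided in the paragraph preceding the proposition. Fix an arbitrary set $S$ of $n$ points in the unit square $U = [0,1]^2$; it suffices to exhibit an empty axis-parallel rectangle in $U$ of area at least $\frac{1}{n+1}$.

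First I would drop a vertical line through each point of $S$. Since at most $n$ distinct $x$-coordinates appear, these lines together with the two vertical sides of $U$ partition $U$ into at most $n+1$ closed vertical strips; taking interiors yields at most $n+1$ open axis-parallel rectangles $R_1, \dots, R_k$ with $k \le n+1$. By construction every point of $S$ lies on one of the bounding vertical lines, hence on the boundary (not in the interior) of these strips, so each $R_i$ is an empty box in the sense of the paper's definition (open, containing no point of $S$).

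Next I would apply the area pigeonhole: each $R_i$ has height $1$, and the widths of the $R_i$ sum to exactly $1$ (the removed vertical lines have measure zero). Therefore their areas sum to $1$, and among at most $n+1$ of them some $R_i$ must satisfy $\area(R_i) \geq \frac{1}{n+1}$. Since $S$ was arbitrary, this gives $A(S) \geq \frac{1}{n+1}$ for every $n$-point set $S$, and hence $A(n) \geq \frac{1}{n+1}$ as claimed.

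There is no real obstacle: the only point that deserves attention is making sure the strips count as valid empty boxes under the paper's convention that boxes are \emph{open} axis-parallel hyperrectangles, which is exactly why we take the open strips between consecutive $x$-coordinates (so that points of $S$ sit on the boundary rather than in the interior). The substantive content of the proposition is not the proof itself but the fact, highlighted by the authors, that this trivial bound is already within a constant factor of optimal.
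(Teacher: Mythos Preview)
Your argument is correct and is exactly the paper's own proof: partition $U$ by vertical lines through the $n$ points into at most $n+1$ open strips and take the widest one. Your only addition is the explicit observation that the open strips satisfy the paper's convention for empty boxes, which is fine.
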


The following observation is immediate from invariance under scaling 
with respect to any of the coordinate axes. 

\begin{observation} \label{O1}
Assume that $A(n) \geq z$ holds for some $n$ and $z$.
Then, given $n$ points in an axis-aligned rectangle $R$, there is an empty
rectangle contained in $R$ of area at least $z \cdot {\rm area}(R)$. 
\end{observation}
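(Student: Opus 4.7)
The plan is to reduce the problem on $R$ to the unit square by a coordinate-wise affine rescaling. Write $R = [a_1,b_1] \times [a_2,b_2]$ and consider the map $\varphi : R \to U = [0,1]^2$ defined by
\[
\varphi(x_1, x_2) = \left( \frac{x_1 - a_1}{b_1 - a_1},\ \frac{x_2 - a_2}{b_2 - a_2} \right).
\]
This map is a bijection that sends axis-parallel rectangles in $R$ to axis-parallel rectangles in $U$, and it sends the given set $S$ of $n$ points in $R$ to a set $S' = \varphi(S)$ of $n$ points in $U$.

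Next I would apply the hypothesis $A(n) \geq z$ to the point set $S'$: there exists an axis-parallel rectangle $B' \subseteq U$ with $B' \cap S' = \emptyset$ and $\area(B') \geq z$. Pulling back, let $B = \varphi^{-1}(B') \subseteq R$; since $\varphi$ is a bijection, $B \cap S = \emptyset$, so $B$ is an empty axis-parallel rectangle contained in $R$.

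Finally I would track how areas transform under $\varphi^{-1}$. Each coordinate is scaled by $b_i - a_i$, so any axis-parallel rectangle has its area multiplied by $(b_1 - a_1)(b_2 - a_2) = \area(R)$. Hence
\[
\area(B) = \area(R) \cdot \area(B') \geq z \cdot \area(R),
\]
which is the desired bound. There is essentially no obstacle here — the only thing to observe is that the hypothesis $A(n) \geq z$ is stated for arbitrary $n$-point configurations in $U$, so we may apply it to the image set $S'$ regardless of how the original points in $S$ were placed.
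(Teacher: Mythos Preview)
Your argument is correct and is exactly the approach the paper has in mind: the paper states only that the observation ``is immediate from invariance under scaling with respect to any of the coordinate axes,'' and your coordinate-wise affine map $\varphi$ makes this precise. Nothing further is needed.
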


Using the next two lemmas we will slightly improve the trivial lower bound
$A(n) \geq \frac{1}{n+1}$ in  our next Theorem~\ref{T2}. 
Let  $\xi=\frac{3-\sqrt{5}}{2}$ be the solution in $(0,1)$ of the
quadratic equation $ (1-x)^2=x$.

\begin{lemma} \label{L1}
Given $2$ points in the unit square, there exists an empty rectangle
with area at least $\frac{3-\sqrt{5}}{2}$. 
This bound is tight, i.e., $A(2)=\frac{3-\sqrt{5}}{2}=0.3819\ldots$.
\end{lemma}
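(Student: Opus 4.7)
The plan is to prove both the lower bound $A(2)\geq \xi$ and exhibit a matching $2$-point construction. The only algebraic input is the defining identity $(1-\xi)^2 = \xi$, which follows from the choice of $\xi = \frac{3-\sqrt{5}}{2}$ as the root in $(0,1)$ of $(1-x)^2=x$.

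For the lower bound, label the two points $p_i=(x_i,y_i)$. After relabeling I may assume $x_1\leq x_2$, and after reflecting the configuration across the horizontal midline of $U$ (which preserves empty-rectangle areas) I may further assume $y_1\leq y_2$. I would then consider the six open rectangles
\begin{equation*}
(0,x_1)\times(0,1),\quad (x_2,1)\times(0,1),\quad (0,1)\times(0,y_1),\quad (0,1)\times(y_2,1),
\end{equation*}
\begin{equation*}
(0,x_2)\times(y_1,1),\quad (x_1,1)\times(0,y_2),
\end{equation*}
whose areas are $x_1$, $1-x_2$, $y_1$, $1-y_2$, $x_2(1-y_1)$, $(1-x_1)y_2$ respectively. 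Each is empty because in each rectangle each $p_i$ has a coordinate equal to an endpoint of the corresponding open interval. If all six areas were strictly less than $\xi$, the four strip bounds would yield $x_2>1-\xi$ and $1-y_1>1-\xi$, so $x_2(1-y_1) > (1-\xi)^2 = \xi$, contradicting the bound on the fifth rectangle.

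For tightness, I would place $p_1=(\xi,\xi)$ and $p_2=(1-\xi,1-\xi)$. Since a maximum-area empty rectangle is restricted (maximal with respect to inclusion), every side of such a rectangle lies on the boundary of $U$ or passes through one of $p_1,p_2$, leaving only a short list of candidates: four outer strips of area $\xi$, two middle strips of area $1-2\xi$, two antipodal diagonal rectangles of area $(1-\xi)^2=\xi$, and a handful of smaller subcells---none exceeding $\xi$. The main obstacle is being systematic in this last enumeration: one must confirm there is no restricted rectangle, formed by mixing boundary edges with coordinate lines through $p_1$ and $p_2$, that has been overlooked. The lower-bound half is conceptually cleaner, hinging entirely on the single identity $(1-\xi)^2 = \xi$.
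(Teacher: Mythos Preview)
Your argument is correct and follows the same idea as the paper: reduce by symmetry to a monotone configuration, exhibit a few obviously empty rectangles (strips plus one ``corner'' rectangle), and derive a contradiction from the identity $(1-\xi)^2=\xi$; the paper simply normalizes to $y_1\ge y_2$ instead of $y_1\le y_2$, uses only the three rectangles actually needed (your strips $1-x_2$, $y_1$, and the corner $x_2(1-y_1)$), and takes the reflected tightness example $p_1=(\xi,1-\xi)$, $p_2=(1-\xi,\xi)$. Your enumeration of restricted rectangles for tightness is more explicit than the paper's terse ``check that the largest empty rectangle has area~$\xi$,'' but the content is the same.
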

\begin{proof}
Let $p_1,p_2 \in U$, and assume without loss of generality
that $x(p_1) \leq x(p_2)$, and  $y(p_1) \geq y(p_2)$. Write
$x=x(p_1)$, and $y=y(p_2)$. 
Consider the three empty rectangles %\linebreak
%$\{(0,0),(x,0),(x,1),(0,1)\}$, 
%$\{(0,0),(1,0),(1,y),(0,y)\}$, and $\{(x,y),(1,y),(1,1),(x,1)\}$. 
$(0,x)\times(0,1)$,
$(0,1)\times(0,y)$, and
$(x,1)\times(y,1)$.
Their areas are $x$, $y$, and $(1-x)(1-y)$, respectively. 
If $x \geq \xi$ or $y \geq \xi$,
we are done, as one of the first two rectangles has area at least
$\xi$. So we can assume that $x \leq \xi$ and $y \leq \xi$.
Then it follows that
$$ (1-x)(1-y) \geq (1-\xi)^2=\xi, $$
so the third rectangle has area at least $\xi$, as required.

To see that this bound is tight, take $p_1=(\xi,1-\xi)$,
$p_2=(1-\xi,\xi)$, and check that the largest empty rectangle has area $\xi$. 
\end{proof}

The proof of the next lemma appears in Appendix \ref{app:L2}.

\begin{lemma} \label{L2}
Given $4$ points in the unit square, there exists an empty rectangle
with area at least $\frac{1}{4}$. This bound is tight, i.e.,
$A(4)=\frac{1}{4}$.  
\end{lemma}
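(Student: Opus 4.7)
The plan is to show $A(4) \geq 1/4$ by a structural reduction to a finite case analysis on the combinatorial ``shape'' of the four points, together with an explicit configuration certifying tightness. To set up the reduction, let $x_{(1)} \leq x_{(2)} \leq x_{(3)} \leq x_{(4)}$ and $y_{(1)} \leq y_{(2)} \leq y_{(3)} \leq y_{(4)}$ denote the sorted $x$- and $y$-coordinates of the four points. The five vertical empty strip rectangles, of widths $x_{(1)}, x_{(2)} - x_{(1)}, x_{(3)} - x_{(2)}, x_{(4)} - x_{(3)}, 1 - x_{(4)}$, and the five horizontal analogues, have widths summing to $1$; if any of these ten widths is at least $1/4$, the corresponding strip is an empty rectangle of area $\geq 1/4$ and we are done. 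Otherwise every width is strictly less than $1/4$, which forces $x_{(i)}, y_{(i)} \in \bigl(\tfrac{i-1}{4}, \tfrac{i}{4}\bigr)$ for every $i$. Writing $p_{(i)}$ for the point with $i$-th smallest $x$-coordinate, the map $\sigma$ sending $i$ to the $y$-rank of $p_{(i)}$ is then a permutation of $\{1,2,3,4\}$.

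Next I would factor out the order-$8$ dihedral symmetry of the problem (the two axial reflections together with the $x \leftrightarrow y$ swap), under which the $24$ permutations split into $7$ orbits. I then exhibit a single empty rectangle of area greater than $1/4$ for each orbit, built from the coordinate cuts at $x_{(2)}, x_{(3)}, y_{(1)}, y_{(2)}, y_{(3)}, y_{(4)}$. Three candidate rectangles cover six of the seven orbits: the rectangle $(x_{(2)}, 1) \times (0, y_{(3)})$ is empty iff $\sigma(\{3,4\}) = \{3,4\}$, with area $(1 - x_{(2)}) y_{(3)} > \tfrac{1}{2} \cdot \tfrac{1}{2}$ (orbits of $(1,2,3,4), (2,1,3,4), (2,1,4,3)$); the rectangle $(0, x_{(3)}) \times (y_{(1)}, y_{(4)})$ is empty iff $\sigma(\{1,2\}) = \{1,4\}$, with area $x_{(3)}(y_{(4)} - y_{(1)}) > \tfrac{1}{2} \cdot \tfrac{1}{2}$ (orbits of $(1,4,2,3)$ and $(1,4,3,2)$); and for the orbit of $(2,4,1,3)$ the four points lie one on each side of their bounding box, so $(x_{(1)}, x_{(4)}) \times (y_{(1)}, y_{(4)})$ is itself empty with area $> \tfrac{1}{4}$.

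The one remaining orbit, the ``twist'' $(1,3,2,4)$, requires a numerical dichotomy on $y_{(2)}$. If $y_{(2)} \geq \tfrac{1}{3}$, the rectangle $(x_{(1)}, 1) \times (0, y_{(2)})$ is empty---the three points other than $p_{(1)}$ have $y$-rank at least $2$ and hence $y$-coordinate at least $y_{(2)}$---and its area is $(1 - x_{(1)}) y_{(2)} \geq \tfrac{3}{4} \cdot \tfrac{1}{3} = \tfrac{1}{4}$. Otherwise $y_{(2)} < \tfrac{1}{3}$; the narrow middle strip bound $y_{(3)} - y_{(2)} < \tfrac{1}{4}$ then forces $y_{(3)} < \tfrac{7}{12}$, and $(0, x_{(4)}) \times (y_{(3)}, 1)$ is empty with area $x_{(4)}(1 - y_{(3)}) > \tfrac{3}{4} \cdot \tfrac{5}{12} = \tfrac{5}{16} > \tfrac{1}{4}$. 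For tightness, I would place the four points at the face-centers of an inscribed diamond, $(\tfrac{1}{4}, \tfrac{1}{2}), (\tfrac{1}{2}, \tfrac{1}{4}), (\tfrac{1}{2}, \tfrac{3}{4}), (\tfrac{3}{4}, \tfrac{1}{2})$, and verify by inspection of the $O(1)$ maximal empty rectangles that the largest has area exactly $1/4$, realized for instance by the boundary strip $(0, \tfrac{1}{4}) \times (0, 1)$ or by $(0, \tfrac{1}{2}) \times (\tfrac{1}{2}, 1)$.

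The main obstacle is the per-orbit verification, and in particular the twist orbit, which is the unique case for which no single rectangle from the short list works uniformly and a threshold dichotomy on an intermediate coordinate is needed.
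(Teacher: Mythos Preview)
Your argument is correct and genuinely different from the paper's. Both proofs use the same tightness configuration, but the lower-bound arguments diverge. You first force a strong structural reduction---all ten axis-aligned strip widths below $\tfrac14$---which pins each sorted coordinate into its own quartile interval; this lets you read off, for six of the seven $D_4$-orbits, an explicit empty rectangle whose area exceeds $\tfrac14$ by straightforward arithmetic, with a single threshold dichotomy on $y_{(2)}$ for the remaining ``twist'' orbit $(1,3,2,4)$. The paper instead reduces only by the reflection $y(p_1)\le y(p_2)$ (twelve permutations), splits into two \emph{special} permutations (handled by the inscribed box, essentially your orbit of $(2,4,1,3)$) and ten \emph{non-special} ones, and treats the latter uniformly via a geometric template: a side-rectangle $P(s)$ and a corner-rectangle $Q(v)$ sharing a boundary segment, followed by an appeal to Lemma~\ref{L1} ($A(2)=\tfrac{3-\sqrt5}{2}$) inside a residual strip containing at most two points. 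Your approach is more elementary in that it avoids the recursion to $A(2)$ entirely and exploits the full dihedral symmetry, at the price of the extra quartile reduction and the ad hoc split in the twist case; the paper's approach trades that for a single uniform recipe across the non-special cases but depends on the separately computed value of $A(2)$.
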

%
%\begin{proof}

%\end{proof}

\begin{theorem} \label{T2}
Given $n$ points in the unit square, there exists an empty rectangle
with area at least $(\frac{5}{4}-o(1)) \cdot \frac{1}{n}$. 
That is, $A(n) \geq (\frac{5}{4}-o(1)) \cdot \frac{1}{n}$. 
\end{theorem}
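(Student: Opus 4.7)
The plan is to sharpen the trivial slab argument of Proposition~\ref{P1} by feeding each slab enough points to apply a strengthened ``$5$-point'' variant of Lemma~\ref{L2}. Specifically, I would sort the $n$ points in increasing $x$-order and partition them into $k=\lceil n/5\rceil$ consecutive blocks of size at most $5$; placing cut points between successive blocks (and $0$, $1$ at the extremes) tiles $U$ into vertical slabs $S_1,\ldots,S_k$ whose widths $w_1,\ldots,w_k$ sum to $1$. The central auxiliary claim I would establish is
\[
A(5) \;\geq\; \frac{1}{4},
\]
that is, every placement of five points in a unit square leaves an empty axis-parallel rectangle of area at least $1/4$. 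Combined with the easy bounds $A(1)=1/2$, $A(2)=\xi>1/4$ (Lemma~\ref{L1}), $A(3)\geq 1/4$ (via an $A(1)$ split at the median point), and $A(4)=1/4$ (Lemma~\ref{L2}), this guarantees $A(r)\geq 1/4$ for every $r\leq 5$, so Observation~\ref{O1} yields an empty rectangle of area at least $w_i/4$ inside each slab $S_i$. Taking the widest slab (of width $\geq 1/k \geq 5/(n+4)$) gives
\[
A(n) \;\geq\; \frac{1}{4}\cdot\frac{5}{n+4} \;=\; \Bigl(\frac{5}{4}-o(1)\Bigr)\cdot\frac{1}{n}.
\]

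The main obstacle is the auxiliary inequality $A(5)\geq 1/4$. It is tight --- for the ``plus-sign'' configuration $\{(1/4,1/2),(3/4,1/2),(1/2,1/4),(1/2,3/4),(1/2,1/2)\}$ every empty axis-parallel rectangle has area exactly $1/4$ --- so no constant slack is available. My approach is a case analysis in the spirit of Lemma~\ref{L2}. Label the five points $p_1,\ldots,p_5$ in $x$-sorted order. First dispose of the easy regime in which one of the six empty vertical strips $(0,x(p_1))\times(0,1)$, $(x(p_i),x(p_{i+1}))\times(0,1)$, $(x(p_5),1)\times(0,1)$, or one of their six horizontal analogues, already has area at least $1/4$; this reduces matters to the ``balanced'' case in which all twelve $x$- and $y$-gaps (margins included) are strictly below $1/4$. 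Splitting $U$ at the $x$-median point $p_3$ yields two half-slabs each containing exactly two of the remaining four points, so Lemma~\ref{L1} combined with Observation~\ref{O1} produces empty rectangles of areas $\xi\,x(p_3)$ and $\xi(1-x(p_3))$; if either reaches $1/4$ we are done, otherwise $x(p_3)$ (and, by the symmetric $y$-median split, the $y$-median coordinate) is confined to a narrow central window. The remaining ``doubly central'' sub-case is the delicate one: I would cut $U$ at $p_3$ into four open quadrants and combine the empty-rectangle bounds produced by $A(r)$ inside each quadrant (according to how many of the four non-median points fall there) with the axis-parallel rectangles that span two adjacent quadrants, following closely the extremal structure of the plus-sign configuration to identify exactly which candidate rectangles must be inspected.

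Once $A(5)\geq 1/4$ is established, the slab decomposition delivers the theorem at once; the $o(1)$ loss in the final constant is solely due to the potentially under-filled last block containing $n\bmod 5$ points.
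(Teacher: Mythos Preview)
Your overall strategy---slab the unit square vertically and apply a small-$n$ bound inside one slab via Observation~\ref{O1}---is exactly the paper's, but you have made the argument harder than necessary and left the hard part unfinished. The paper writes $n=5k+r$ with $0\le r\le 4$, cuts $U$ into $k+1$ \emph{equal-width} slabs, and uses pigeonhole: since $n<5(k+1)$, some slab contains at most $4$ points, so Lemma~\ref{L2} (i.e.\ $A(4)=\tfrac14$, already proved) applies directly, giving area $\ge \tfrac14\cdot\tfrac{1}{k+1}\ge \tfrac{5}{4(n+5)}$. The paper even remarks that the same bound follows from the variant of your construction in which the vertical cuts pass \emph{through} every $5$th point, so that each open slab again has at most $4$ interior points.

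By placing your cuts strictly \emph{between} consecutive blocks of five, you force each slab to contain up to $5$ points, and this is where the gap lies: you now need $A(5)\ge\tfrac14$, which is not among the paper's lemmas, and your sketch of it does not close. The median split at $p_3$ only gives empty rectangles of area $\xi\,x(p_3)$ and $\xi(1-x(p_3))$; since $\xi=\tfrac{3-\sqrt5}{2}\approx 0.382$, the constraint $\max\{\xi x(p_3),\,\xi(1-x(p_3))\}<\tfrac14$ merely pins $x(p_3)$ to roughly $(0.345,0.655)$, which is not a ``narrow central window.'' Your final ``doubly central'' sub-case is then dismissed with a reference to the extremal plus-sign configuration and an unspecified list of candidate rectangles; that is precisely the delicate part, and a full case analysis would be at least as intricate as the proof of Lemma~\ref{L2}. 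All of this extra work is avoidable: either use equal-width slabs and pigeonhole as the paper does, or slide your cuts onto the $5$th, $10$th, $\ldots$ points so that every open slab carries at most $4$ interior points, and invoke $A(4)=\tfrac14$ directly.
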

\begin{proof}
Write $n=5k+r$, for some $k \in \NN$ and $r \in \{0,1,2,3,4\}$. 
Partition $U$ into $k+1$ rectangles of equal width. There exists
at least one rectangle $R'$ with at most $4$ points in its interior. 
By Lemma~\ref{L2} and Observation~\ref{O1}, 
$R'$ contains an empty rectangle of area at least
$$ \frac{1}{4} \cdot \frac{1}{k+1} \geq \frac{5}{4} \cdot \frac{1}{n+5} 
= \left(\frac{5}{4}-o(1)\right) \cdot \frac{1}{n}, $$
as claimed.
\end{proof}

The lower bound derived in the proof, $\frac{5}{4} \cdot \frac{1}{n+5}$, is
better than $\frac{1}{n+1}$ for all $n \geq 16$. For $n=5k+4$,
the resulting bound is $\frac{5}{4} \cdot \frac{1}{n+1}$. 
An alternative partition, yielding the same bound in Theorem~\ref{T2},
can be obtained by dividing $U$ into rectangles with vertical lines
through every $5$th point of the set. Slightly better lower
bounds, particularly 
for small values of $n$ can be obtained by constructing different
partitions tailored for specific values of $k,r$ (with a number of
points other than $4$ in a few of the rectangles), and using estimates
on $A(2)$, $A(6)$, etc.  For instance, from Lemma~\ref{L2} we can derive that 
$A(6) \geq 3 - 2\sqrt{2}=0.1715\ldots$. 
Incidentally, we remark that a suitable $6$-point construction gives
from the other direction that $A(6)<0.2$.

%\vspace{-0.5\baselineskip}
\paragraph{The upper bound.}
Let $C_n$ be the van der Corput set of $n$ points~\cite{C35a,C35b},
with coordinates $(x(k), y(k))$, $0 \le k \le n-1$, constructed as
follows~\cite{Ch00,Ma99}: Let $x(k) = k/n$.
If $k=\sum_{j \ge 0} a_j 2^j$ is the binary representation of $k$,
where $a_j \in \{0,1\}$, then $y(k) = \sum_{j\ge 0} a_j 2^{-j-1}$.
Observe that all points in $C_n$ lie in the unit square $U=[0,1]^2$.

\begin{theorem} \label{T1}
For the van der Corput set of $n$ points, $C_n \subset U$, the area of
the largest empty axis-parallel rectangle is less than $4/n$.
\end{theorem}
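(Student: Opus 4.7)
Let $R=(x_1,x_2)\times(y_1,y_2)\subset[0,1]^2$ be an open rectangle of width $w$ and height $h$ containing no point of $C_n$; the target is $wh<4/n$. The plan hinges on the binary structure of $y(\cdot)$: for any $s\ge 1$ and $j\in\{0,\dots,2^s-1\}$, the indices $k$ with $y(k)\in[j/2^s,(j+1)/2^s)$ are exactly those $k$ whose $s$ least significant bits form the bit-reversal of $j$, i.e., a fixed residue class modulo $2^s$. Hence the $x$-coordinates $k/n$ of these points form an arithmetic progression in $[0,1]$ with common difference $2^s/n$.

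I would choose $s$ to be the smallest positive integer with $h>2\cdot 2^{-s}$; this automatically yields $h\le 4\cdot 2^{-s}$. If $2^s>n$, then $h<4/n$ already and so $wh<4/n$ is immediate. Otherwise, the scaled interval $(2^s y_1,2^s y_2)$ has length exceeding $2$ and so contains two consecutive integers, which means $(y_1,y_2)$ contains a closed dyadic sub-interval $[j/2^s,(j+1)/2^s]$. The emptiness of $R$ then forces $(x_1,x_2)$ to avoid the corresponding AP of $x$-coordinates, and a short enumeration of the gaps in $[0,1]$ (the two boundary gaps to $0$ and to $1$, together with the interior gaps of size $2^s/n$) shows that no gap exceeds $2^s/n$. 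Hence $w\le 2^s/n$ and $wh\le 4/n$.

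The principal obstacle is to sharpen $\le$ to $<$ in the boundary case $h=4\cdot 2^{-s}$ (when $h<4\cdot 2^{-s}$ strictness is already immediate from $w\le 2^s/n$). Here the open interval $(2^s y_1,2^s y_2-1)$ has length exactly $3$ and therefore contains at least two integers, so $(y_1,y_2)$ contains at least two disjoint closed dyadic sub-intervals of length $2^{-s}$. This yields two distinct residue classes $a_1\ne a_2$ modulo $2^s$, so the forbidden $x$-coordinates form the union of two APs with common difference $2^s/n$ and distinct offsets. A direct check of the interior gaps (they alternate between $(a_2-a_1)/n$ and $(2^s-(a_2-a_1))/n$, each at most $(2^s-1)/n$) and of the two boundary gaps (the two largest AP elements are distinct members of $\{n-2^s,\dots,n-1\}$, so the gap to $1$ is likewise at most $(2^s-1)/n$) shows that every gap is at most $(2^s-1)/n$. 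Therefore $w\le(2^s-1)/n$ and $wh\le 4(1-2^{-s})/n<4/n$, closing the remaining case.
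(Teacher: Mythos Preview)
Your proof is correct and rests on the same core idea as the paper's: in the van der Corput set, the indices $k$ whose $y(k)$ lie in a fixed dyadic interval $[j/2^s,(j+1)/2^s)$ form a single residue class modulo $2^s$, so their $x$-coordinates are an arithmetic progression with step $2^s/n$, bounding $w$.

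The difference is purely in how the dyadic scale is selected. You choose $s$ to be the least integer with $h>2\cdot 2^{-s}$, which only yields $h\le 4\cdot 2^{-s}$ and forces you into the boundary case $h=4\cdot 2^{-s}$, where you (correctly) invoke a second dyadic sub-interval and a second residue class to squeeze the gap bound down to $(2^s-1)/n$. The paper instead lets $I_y=[t\cdot 2^{-q},(t+1)\cdot 2^{-q})$ be the \emph{longest} canonical interval contained in the open $y$-projection; the maximality of $I_y$ immediately forces $h<2\cdot 2^{-q+1}=4\cdot 2^{-q}$ strictly (since $h\ge 2\cdot 2^{-q+1}$ would create room for a canonical interval of length $2^{-q+1}$), and then $w\le 2^q/n$ gives $wh<4/n$ in one line. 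So the paper's parameterization buys the strict inequality for free and avoids your two-AP case analysis, at the cost of the small observation that a longest canonical interval always exists. Your route is slightly longer but entirely sound.
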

\begin{proof}
Let $B$ be any open empty axis-parallel rectangle inside the unit square.
We next show\footnote{The argument we use here is similar to
that used for bounding the geometric discrepancy of the van der Corput
set of points.} that the area of $B$ is less than $4/n$. 
Following the presentation in \cite[p.~39]{Ma99},
a \emph{canonical} interval is an interval of the form
$[u \cdot 2^{-v}, (u+1) \cdot 2^{-v})$
for some positive integer $v$ and an integer $u \in [0, 2^v -1]$.

Let $I_y = [t \cdot 2^{-q}, (t+1) \cdot 2^{-q})$
be the longest canonical interval
contained in the projection of the empty rectangle $B$ onto the $y$-axis
(recall that $B$ is open, so this projection is an open interval).
Then the side length of $B$ along $y$ must be less than
$2 \cdot 2^{-q+1}$ because otherwise the projection would
contain a longer canonical interval of length $2^{-q+1}$.

Let $k=\sum_{j \ge 0} a_j 2^j$ be the binary representation of
an integer $k$, $0 \le k \le n - 1$.
In the van der Corput construction,
a point in $C_n$ with $x$-coordinate $k/n$
has its $y$-coordinate in the canonical interval $I_y$ if and only if
$t \cdot 2^{-q} \le \sum_{j\ge 0} a_j 2^{-j-1} < (t+1) \cdot 2^{-q}$,
which happens exactly when
$\sum_{j=0}^{q-1} a_j 2^{-j-1} = t \cdot 2^{-q}$.
In this case, $k \bmod 2^q = \sum_{j=0}^{q-1} a_j 2^j$ is a constant
$z= z(t, q)$.
It then follows that the side length of $B$ along $x$
is at most $2^q / n$.
Therefore the area of $B$ is less than
$2 \cdot 2^{-q+1} \cdot 2^q / n =4/n$, as required.
\end{proof}

\begin{corollary} \label{C1}
$A(n) < 4\cdot \frac{1}{n}$.
\end{corollary}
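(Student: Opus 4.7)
The corollary is essentially an immediate consequence of Theorem~\ref{T1}. Recall that $A(n)$ is defined as the \emph{minimum} of $A(S)$ over all $n$-point sets $S \subset U$, where $A(S)$ is the area of the largest empty axis-parallel rectangle with respect to $S$. Thus to establish an upper bound on $A(n)$, it suffices to exhibit a single $n$-point configuration for which the largest empty axis-parallel rectangle has small area.

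The plan is to use the van der Corput set $C_n$ as the witness configuration. First I would observe that $C_n \subset U$ consists of exactly $n$ points (by construction, with one point for each $k \in \{0,1,\dots,n-1\}$), so $C_n$ is a valid competitor for the minimum defining $A(n)$. Then by definition, $A(n) \le A(C_n)$. Next I would invoke Theorem~\ref{T1} directly, which guarantees that every open empty axis-parallel rectangle inside $U$ (relative to $C_n$) has area strictly less than $4/n$; in particular the largest one satisfies $A(C_n) < 4/n$. Chaining these gives $A(n) \le A(C_n) < 4/n$, as claimed.

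There is really no obstacle here: the work has all been done in Theorem~\ref{T1}, and the corollary is just the translation from "a specific good construction exists" to "the minimax quantity $A(n)$ is correspondingly small." The only minor point worth noting in the write-up is that $A(n) < 4/n$ (strict inequality) follows because Theorem~\ref{T1} gives strict inequality for every empty rectangle, and among the finitely many maximal empty rectangles determined by $C_n$ there is a largest one whose area is therefore also strictly less than $4/n$.
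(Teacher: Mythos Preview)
Your proposal is correct and matches the paper's approach exactly: the corollary is stated without proof immediately after Theorem~\ref{T1}, precisely because $A(n) \le A(C_n) < 4/n$ is the obvious deduction. Your added remark about the maximum being attained among finitely many maximal empty rectangles is a nice touch but not strictly needed, since Theorem~\ref{T1} already asserts the strict bound for the largest empty rectangle.
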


\subsection{Empty boxes in higher dimensions}\label{subsec:higher}

As in the planar case, $A_d(n) \geq \frac{1}{n+1}$ is immediate,
by partitioning the hypercube $U_d$ with $n$ axis-parallel hyperplanes,
one through each of the $n$ points. 
By projecting the $n$ points on one of the faces of $U_d$,
and proceeding by induction on $d$, it follows that the lower bound in
Theorem \ref{T1} carries over here too. 
Thus we have:

\begin{proposition} \label{P2}
$A_d(n) \geq \frac{1}{n+1}$. Moreover,
$A_d(n) \geq \left(\frac{5}{4}-o(1)\right) \cdot \frac{1}{n}$. 
\end{proposition}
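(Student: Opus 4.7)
The plan is to establish both bounds by reducing to arguments already available for small $d$, exploiting the simple but crucial observation that projecting the point set onto a lower-dimensional coordinate face and then extending back by the full range $(0,1)$ in the remaining coordinates converts an empty low-dimensional box into an empty full-dimensional box of the same measure.

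For the trivial bound $A_d(n) \geq \frac{1}{n+1}$, I would project the $n$ points onto the first coordinate axis; the $n$ projected values split $(0,1)$ into at most $n+1$ open subintervals, and by pigeonhole the widest such subinterval $I$ has length at least $\frac{1}{n+1}$. The open box $I \times (0,1)^{d-1}$ then has volume at least $\frac{1}{n+1}$ and is empty of $S$, since any point of $S$ inside it would project into $I$, contradicting the construction of $I$.

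For the stronger bound, I would reduce to the planar case already handled in Theorem~\ref{T2}. Let $\pi\colon U_d \to U_2$ denote the projection onto the first two coordinates, and let $S' = \pi(S)$, a (multi)set of at most $n$ points in $U_2$. Applying Theorem~\ref{T2} to $S'$ yields an empty open axis-parallel rectangle $R \subset U_2$ of area at least $\left(\tfrac{5}{4}-o(1)\right)/n$. The open box $B := R \times (0,1)^{d-2}$ is then empty of $S$, since any $p \in S \cap B$ would satisfy $\pi(p) \in R \cap S'$, contradicting the emptiness of $R$; and $\vol(B) = \area(R) \cdot 1^{d-2} = \area(R)$, giving the claimed bound.

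No real obstacle arises in this argument; the only point to check is that the projection preserves emptiness in the correct direction, i.e., that the preimage of the empty rectangle under $\pi$ (intersected with $U_d$) is an empty prism, which is immediate. Equivalently one could induct on $d$, projecting to a face of dimension $d-1$ at each step and invoking the inductive hypothesis; the projection-based argument above simply collapses the induction into a single step by going directly to dimension two, and is what I would present.
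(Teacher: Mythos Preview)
Your proposal is correct and essentially coincides with the paper's argument: the paper obtains the first bound by slicing $U_d$ with $n$ axis-parallel hyperplanes (equivalent to your projection onto one axis), and obtains the second bound by projecting onto a face of $U_d$ and invoking the planar Theorem~\ref{T2} via induction on $d$, which is exactly the inductive variant you mention at the end. Your direct projection to $U_2$ just collapses that induction into a single step, so there is no substantive difference.
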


\old{
By projecting the $n$ points on one of the faces of $U_d$,
and proceeding by induction on $d$, it follows that the lower bound in
Theorem \ref{T1} carries over here too. 
That is, $A_d(n) \geq (\frac{5}{4}-o(1)) \cdot \frac{1}{n}$. 
} % old

We next show that, modulo a constant factor depending on % the dimension
$d$, this estimate is also best possible. 
Let $H_n$ be the Halton-Hammersely set of $n$
points~\cite{Hal60,Ham60}, with coordinates 
$(x_0(k), x_1(k), \ldots, x_{d-1}(k))$, $0 \le k \le n-1$,
constructed as follows \cite{Ch00,Ma99}:
Let $p_i$ be the $i$th prime number.
Each integer $k$ has a unique base-$p_i$ representation
$k = \sum_{j\ge 0} a_{i,j} p_i^j$, where $a_{i,j} \in [0, p_i-1]$.
Let $x_0(k) = k/n$,
and let $x_i(k) = \sum_{j\ge 0} a_{i,j} p_i^{-j-1}$, $1 \le i \le d-1$.
Then all points in $H_n$ are inside the unit hypercube $U_d=[0,1]^d$.

\begin{theorem} \label{T3}
For the  Halton-Hammersely set of $n$ points, $H_n \subset U_d$, 
the volume of the largest empty axis-parallel box is less than 
$(2^{d-1} \prod_{i=1}^{d-1} p_i) / n$, where $p_i$ is the $i$th prime.
\end{theorem}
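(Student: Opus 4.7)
The plan is to extend the argument of Theorem~\ref{T1} directly from the van der Corput set to the Halton--Hammersley set, replacing the base-$2$ canonical intervals used for the $y$-axis with base-$p_i$ canonical intervals for each of the coordinates $i \in \{1,\ldots,d-1\}$, and then fusing the resulting digit constraints via the Chinese Remainder Theorem.

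First I fix an arbitrary empty open axis-parallel box $B \subset U_d$, and for each $i \geq 1$ call an interval of the form $[u \cdot p_i^{-v}, (u+1) \cdot p_i^{-v})$ a \emph{base-$p_i$ canonical interval}. Let $I_{y_i} = [t_i \cdot p_i^{-q_i}, (t_i+1) \cdot p_i^{-q_i})$ be the longest such interval contained in the projection of $B$ onto axis $i$. Exactly as in the $d=2$ case, the minimality of $q_i$ forces the side length of $B$ along axis $i$ to be strictly less than $2 \cdot p_i^{-q_i+1}$; otherwise the open projection, having length at least $2 \cdot p_i^{-q_i+1}$ and the base-$p_i$ canonical intervals at level $q_i-1$ being spaced $p_i^{-q_i+1}$ apart, would necessarily contain a full canonical interval of length $p_i^{-q_i+1}$.

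Next I identify the indices $k \in \{0,\ldots,n-1\}$ whose Halton--Hammersley point lies in $I_{y_i}$ along axis $i$ for every $i \geq 1$. Writing $k = \sum_{j \ge 0} a_{i,j} p_i^j$, the condition $x_i(k) \in I_{y_i}$ fixes the digits $a_{i,0},\ldots,a_{i,q_i-1}$, i.e.\ prescribes $k \bmod p_i^{q_i}$ to some residue $z_i$. Because $p_1,\ldots,p_{d-1}$ are distinct primes, the moduli $p_i^{q_i}$ are pairwise coprime, and the Chinese Remainder Theorem collapses the simultaneous constraints into a single congruence $k \equiv z \pmod{M}$, where $M := \prod_{i=1}^{d-1} p_i^{q_i}$. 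The surviving $x_0$-coordinates $k/n$ are therefore spaced exactly $M/n$ apart; since $B$ is empty and its $x_0$-projection is an open interval that must avoid all of them, that projection has length at most $M/n$.

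Combining the side-length bounds yields
\[
\vol(B) \;<\; \left(\prod_{i=1}^{d-1} 2 \cdot p_i^{-q_i+1}\right) \cdot \frac{M}{n} \;=\; \frac{2^{d-1} \prod_{i=1}^{d-1} p_i}{n},
\]
which is independent of the $q_i$'s, as required. The only genuinely new ingredient beyond Theorem~\ref{T1} is the CRT step, and it is the one I expect to be the "main obstacle" in the mild sense that the argument only closes because the Halton--Hammersley construction uses pairwise coprime bases — had the construction reused a base, the congruences would not decouple and the spacing $M/n$ would degrade. Degenerate cases (e.g.\ $q_i = 0$ when the projection onto axis $i$ is essentially the whole of $(0,1)$) need a brief sanity check, but the side-length bound $< 2 p_i$ is then trivially true and the corresponding factor $p_i^{q_i} = 1$ simply drops out of $M$, so the telescoping still produces the stated bound.
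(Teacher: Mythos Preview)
Your proof is correct and follows essentially the same route as the paper's: choose a longest base-$p_i$ canonical interval inside each projection to bound the side along axis $i$ by $2p_i^{-q_i+1}$, translate the membership conditions into congruences $k \equiv z_i \pmod{p_i^{q_i}}$, combine them via the Chinese Remainder Theorem into a single congruence modulo $M=\prod_i p_i^{q_i}$, and conclude that the $x_0$-extent is at most $M/n$. The paper omits your remark on the degenerate case $q_i=0$, but otherwise the arguments are identical.
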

\begin{proof}
Let $B$ be any open empty box inside the unit hypercube.
We next show that the volume of $B$
is less than $(2^{d-1} \prod_{i=1}^{d-1} p_i) / n$.
Generalizing the planar case,
a \emph{canonical} interval of the axis $x_i$, $1 \le i \le d - 1$,
is an interval of the form
$[u \cdot p_i^{-v}, (u+1) \cdot p_i^{-v})$
for some positive integer $v$ and an integer $u \in [0, p_i^v - 1]$.
Note that $p_1 = 2$.

First consider each axis $x_i$, $1 \le i \le d-1$.
Let $I_i = [t_i \cdot p_i^{-q_i}, (t_i + 1) \cdot p_i^{-q_i})$
be a longest canonical interval (there could be more than one for
$i \geq 2$)
contained in the projection of the empty box $B$ onto the axis $x_i$.
Then the side length of $B$ along $x_i$ must be less than
$2\cdot p_i^{-q_i+1}$ because otherwise the projection would
contain a longer canonical interval of length $p_i^{-q_i + 1}$.

Next consider the axis $x_0$.
Let $k=\sum_{j \ge 0} a_{i,j} p_i^j$ be the base-$p_i$ representation of
an integer $k$, $0 \le k \le n - 1$ and $1 \le i \le d-1$.
In the Halton-Hammersely construction,
a point in $H_n$ with $x_0$-coordinate $k/n$
has its $x_i$-coordinate in the canonical interval $I_i$ if and only if
$t_i \cdot p_i^{-q_i} \le \sum_{j\ge 0} a_{i,j} p_i^{-j-1}
	< (t_i + 1) \cdot p_i^{-q_i}$,
which happens exactly when
$\sum_{j=0}^{q_i-1} a_{i,j} p_i^{-j-1} = t_i \cdot p_i^{-q_i}$.
In this case, $k \bmod p_i^{q_i} = \sum_{j=0}^{q_i-1} a_{i,j} p_i^j$
is a constant $z_i = z_i(t_i, q_i)$.

Note that the $d-1$ integers $p_i^{q_i}$, $1 \le i \le d-1$,
are relatively prime.
By the Chinese remainder theorem,
it follows that a point in $H_n$ with $x_0$-coordinate $k/n$
has its $x_i$-coordinate in the canonical interval $I_i$ for all
$1 \le i \le d-1$
if and only if
$k \bmod \prod_{i=1}^{d-1} p_i^{q_i} = z$ for some integer
$z = z(t_1, q_1;\ldots;t_{d-1},q_{d-1})$.
Therefore the side length of $B$ along $x_0$
is at most $(\prod_{i=1}^{d-1} p_i^{q_i}) / n$.
Consequently, the volume of $B$ is less than
$(\prod_{i=1}^{d-1} 2 \cdot p_i^{-q_i+1}) \cdot
(\prod_{i=1}^{d-1} p_i^{q_i})/n =
(2^{d-1} \prod_{i=1}^{d-1} p_i) / n$.
\end{proof}

\begin{corollary} \label{C2}
$A_d(n) < (2^{d-1} \prod_{i=1}^{d-1} p_i) \cdot \frac{1}{n}$.
\end{corollary}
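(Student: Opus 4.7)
The plan is to derive this corollary directly from Theorem~\ref{T3} together with the definition of $A_d(n)$. Recall that $A_d(n)$ is defined as the infimum of $A_d(S)$ over all $n$-point sets $S \subset U_d$. So to upper bound $A_d(n)$, it suffices to exhibit a single set of $n$ points in $U_d$ for which the largest empty axis-parallel box has small volume.

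First I would observe that the Halton-Hammersely set $H_n$ is such an $n$-point set, all contained in $U_d$ by construction. Then by Theorem~\ref{T3}, every empty axis-parallel box inside $U_d$ relative to $H_n$ has volume strictly less than $(2^{d-1} \prod_{i=1}^{d-1} p_i)/n$. Hence $A_d(H_n) < (2^{d-1} \prod_{i=1}^{d-1} p_i)/n$, and the minimality in the definition of $A_d(n)$ yields $A_d(n) \le A_d(H_n) < (2^{d-1} \prod_{i=1}^{d-1} p_i) \cdot \tfrac{1}{n}$, as claimed.

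There is really no obstacle here: all the substantive work — the Chinese-remainder argument that turns the canonical-interval projection into a bound on the extent along the $x_0$-axis — has already been carried out in the proof of Theorem~\ref{T3}. The corollary is simply the rephrasing of that theorem in terms of the minimax quantity $A_d(n)$, via the trivial inequality $A_d(n) \le A_d(H_n)$.
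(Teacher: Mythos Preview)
Your proposal is correct and matches the paper's approach exactly: the paper states Corollary~\ref{C2} without proof, as it is an immediate consequence of Theorem~\ref{T3} via the inequality $A_d(n) \le A_d(H_n)$.
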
 

It is known  that $ (\prod_{i=1}^{x} p_i) / x^x \to 1$ as $x \to
\infty$ \cite{R97}.

\section{A $(1-\eps)$-approximation algorithm for finding the largest
empty box}\label{sec:approx1} 

Let $R$ be an axis-parallel $d$-dimensional box in $\RR^d$ 
containing $n$ points. In this section, we present an efficient
$(1-\eps)$-approximation algorithm for computing a maximum-volume
empty axis-parallel box contained in~$R$. 

\begin{theorem} \label{T4}
Given an axis-parallel $d$-dimensional box $R$ in $\RR^d$ containing
$n$ points, there is a $(1-\eps)$-approximation algorithm, running in  
$$ O\left(\left( \frac{8 e d}{\eps^2} \right)^d \cdot n \cdot
\log^d{n} \right) $$
time, for computing a maximum-volume empty axis-parallel box
contained in $R$. 
\end{theorem}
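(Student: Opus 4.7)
The approach is to approximate any empty box by a member of a polynomial-sized collection of \emph{canonical} boxes, enumerate them, and test each for emptiness. The crucial enabling fact is the lower bound $A_d(n) \geq 1/(n+1)$ from Proposition~\ref{P2}, which guarantees that the volume of the optimal empty box $B^*$ is not too small, and so canonical widths and positions can be sampled on a grid whose resolution in each coordinate is polynomial in $n$.

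After rescaling so that $R=[0,1]^d$, I would fix $\eta = \Theta(\eps/d)$ and declare the canonical widths in each coordinate to be $w^{(k)} = (1-\eta)^k$ for $k=0,1,\dots,K$, with $K$ chosen so that $(1-\eta)^K \leq 1/(n+1)$ (giving $K = O(d\log n/\eps)$); and the canonical left-positions of a canonical box with width $w$ in coordinate $i$ to be the nonnegative integer multiples of $\eta w$ inside $[0,1-w]$. A canonical box uses a canonical width and a canonical left-position in each coordinate. The approximation lemma to prove is that every empty box $B^*$ with extents $w_i^*$ and left corner $(a_1^*,\dots,a_d^*)$ contains a canonical box $\tilde B$ with $\vol(\tilde B) \geq (1-\eps)\vol(B^*)$: I would take $\tilde w_i$ to be the largest canonical width not exceeding $(1-\eta) w_i^*$, which leaves slack $w_i^* - \tilde w_i \geq \eta \tilde w_i$, so that snapping the left endpoint to the nearest canonical position inside $[a_i^*, a_i^* + w_i^* - \tilde w_i]$ keeps $\tilde B \subseteq B^*$; the volume loss is controlled by $(1-\eta)^{2d} \geq 1-\eps$ once $\eta$ is tuned.

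For the counting, I would enumerate only shape tuples $(w^{(k_1)},\dots,w^{(k_d)})$ whose product is $\Omega(1/n)$, since smaller ones cannot beat the trivial lower bound; this restricts $\sum_i k_i$ to $O(d\log n/\eps)$, yielding at most $(O(\log n/\eps))^d$ shape tuples. For a fixed shape, the number of canonical positions is $\prod_i \lceil 1/(\eta w^{(k_i)})\rceil = O((d/\eps)^d \cdot n)$, using the width-product bound. Multiplying gives $O((d/\eps^2)^d \cdot n \log^d n)$ canonical boxes, matching the claimed running time. Each can be tested for emptiness via an orthogonal range-counting query on the $n$ points; by sweeping canonical boxes shape-by-shape so that consecutive positions differ in only one coordinate, the check amortizes to $O(\log n)$ per box.

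The main obstacle is the approximation step: carefully tuning $\eta$ and the snapping procedure so that the canonical box both sits inside $B^*$ and loses only a $(1-\eps)$ factor in volume, and doing so uniformly across all shapes with potentially unbalanced $w_i^*$. A secondary concern is the enumeration bookkeeping that amortizes emptiness checks to $O(\log n)$ per canonical box rather than the naive $O(\log^{d-1} n)$ from a fresh range query each time; this is what produces the $\log^d n$ (rather than $\log^{2d-1} n$) factor in the final running time.
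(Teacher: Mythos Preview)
Your proposal is essentially the paper's approach: geometric discretization of widths by powers of $1-\eta$ with $\eta=\Theta(\eps/d)$, canonical grid positions at spacing $\eta w$, the containment/approximation lemma via the slack $\eta w$, and the restriction to ``large'' shape tuples using the lower bound $A_d(n)\ge 1/(n+1)$, with the shape count bounded through $\binom{K+d}{d}$. All of this matches Lemmas~\ref{L7}--\ref{L10} of the paper.

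The one place where you diverge is emptiness testing, and there your argument has a gap. You claim the number of canonical boxes already matches the target running time, but then say each test ``amortizes to $O(\log n)$''; that would cost an extra $\log n$ factor, i.e.\ $\log^{d+1}n$ rather than $\log^d n$. The sweeping argument you sketch (moving one coordinate at a time and updating via range queries) is not developed, and it is not clear how to get $O(\log n)$ per step from it in $d$ dimensions. The paper avoids range-searching data structures entirely: for each fixed canonical shape it builds the associated grid, computes cell counts in $O(dn+M(G_0))$ time, forms $d$-dimensional prefix sums (``corner box numbers'') in $O(2^d M(G_0))$ time, and then answers each emptiness query in $O(2^d)$ time by inclusion--exclusion (Lemmas~\ref{L3}--\ref{L6}). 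This gives $O((4d/\eps)^d n)$ per shape and, multiplied by the $(2e/\eps)^d\log^d n$ shapes, the stated bound with no extra logarithmic factor.
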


We first set a few parameters. 
%The proof of the theorem is organized as a sequence of lemmas. 

%\vspace{-0.8\baselineskip}
\paragraph {Parameters.}

We assume that $0 <\eps <1$, and $d \geq 3$, 
which cover all cases of interest. To somewhat simplify our
calculations we also assume that $n \geq 12$. Let us choose parameters
\begin{equation} \label{E14}
\delta=\frac{\eps}{2d}, \ \ 
m= \left \lceil \frac{1}{\delta} \right \rceil
=\left \lceil \frac{2d}{\eps} \right \rceil, \ \ {\rm and } \ 
a=\frac{1}{1-\delta}.
\end{equation} 
Let $k$ be the unique positive integer such that
\begin{equation} \label{E28}
a^{k-1} \leq n+1 <a^k.
\end{equation} 

We next derive some inequalities that follow from this setting.
By assumptions $0 < \epsilon < 1$ and $d \ge 3$,
we have $\delta =\frac{\eps}{2d} \leq \frac{1}{6}$,
and $m \geq 2d/\eps \geq 2d \geq 6$.  Then a simple calculation shows
that 
\begin{equation} \label{E12}
a = \frac{1}{1-\delta} \le 1+ \frac{6}{5} \delta = 1+ \frac{3\eps}{5d}. 
\end{equation} 

It is also clear that $a =\frac{1}{1-\delta} > 1+ \delta$.
So $a$ satisfies
\begin{equation} \label{E13}
1 < 1+ \delta <
a =\frac{1}{1-\delta} \leq 1+ \frac{6}{5} \delta \leq \frac{6}{5}. 
\end{equation} 
Since $n \geq 12$ and $a \leq \frac{6}{5}$, it follows from the
second inequality in (3) that $k \geq 15$. 
We now derive an upper bound on $k$ as a function of $n$, $d$ and
$\eps$. First observe that 
$$ \log{a} =\log{\frac{1}{1-\delta} } \geq \log(1+\delta). $$
We also have 
$$ \ln(1+\delta) \geq 0.9 \delta {\rm \ \ for \ }
\delta \leq \frac{1}{6}. $$
From~\eqref{E28} we deduce the following sequence of inequalities: 
\begin{equation} \label{E5}
 k -1 \leq \frac{\log(n+1)}{\log{a}} \leq 
\frac{\log(n+1)}{\log(1+\delta)} =
\frac{\log(n+1)\cdot \ln{2}}{\ln(1+\delta)} \leq 
\frac{\log(n+1)\cdot \ln{2}}{0.9 \delta} \leq
\frac{0.78 \log(n+1)}{\delta}.
\end{equation} 

From~\eqref{E5}, a straightforward calculation 
(where we use $n \geq 12$ and $\delta \leq 1/6$) gives
\begin{equation} \label{E20}
k \leq \frac{0.78 \log(n+1)}{\delta} +1 \leq 
\frac{0.78 \log(n+1) + 1/6}{\delta}  \leq 
\frac{\log{n}}{\delta}= \frac{2d}{\eps} \cdot \log{n}.
\end{equation} 

\paragraph{Overview of the algorithm.}
%We first give an overview of the algorithm. 
By a direct generalization of Observation~\ref{O1}, we can assume w.l.o.g.\  
that $R=[0,1]^d$. Let $S$ be the set of $n$ points contained in $R$.
The algorithm generates a finite set $\B$ of {\em canonical boxes};
to be precise, only a subset of {\em large} canonical boxes.
For each large canonical box $B_0 \in \B$, a corresponding {\em
canonical grid} is considered, and $B_0$ is placed with its lowest
corner at each such grid position and tested for emptiness and containment in $R$. 
The one with the largest volume amongst these is returned in the end. 

\paragraph{Canonical boxes and their associated grids.}
%We now describe the canonical boxes and their associated grids.
Consider the set $\B$ of {\em canonical boxes}, whose 
all side lengths are elements of
\begin{equation} \label{E18}
\X=\left\{\frac{a^i}{a^{k+1}}, i=0,1,\ldots,k-1\right\}. 
\end{equation} 
For a given canonical box $B_0 \in \B$, with sides 
$X_1,\ldots,X_d \in \X$,
consider the {\em canonical grid associated with $B_0$} 
with points of coordinates 

\begin{equation} \label{E10}
\left(\frac{i_1 X_1}{m}, \ldots, \frac{i_d X_d}{m}\right), \ \ 
i_1,\ldots, i_d \geq 0 
\end{equation} 
contained in $U_d$. 

Let $B$ be a maximum-volume empty box in $R=U_d$, with
$V_{\rm max}=\vol(B)$. By the trivial inequality $A_d(n) \geq
\frac{1}{n+1}$ of Proposition~\ref{P2}, we have
$V_{\rm max} \geq \frac{1}{n+1}$. This lower bound is crucial
in the design of our approximation algorithm, as it enables us to
bound from above the number of large canonical boxes (canonical boxes
of smaller volume can be safely ignored).  

Consider the following set $\I$ of $k+1$ intervals 
\begin{equation} \label{E17}
\I=\left\{\left[\frac{a^i}{a^{k+1}}, \frac{a^{i+1}}{a^{k+1}}\right),
  i=0,1,\ldots,k\right\}. 
\end{equation} 
Observe that for each $i=1,\ldots,d$, the extent in the $i$th
coordinate of $B$ is at least $\frac{a}{a^{k+1}}=\frac{1}{a^k}$,
since otherwise we would have 
$\vol(B) < \frac{1}{a^k} < \frac{1}{n+1}$, a contradiction.
Let $Z_i$ be the extent in the $i$th coordinate of $B$, for
$i=1,\ldots,d$. By the above observation, for each $i=1,\ldots,d$,
$Z_i$ belongs to one of the last $k$ intervals in the set $\I$. That
is, there exists an integer $y_i \in \{0,1,\ldots,k-1\}$, such that 
\begin{equation} \label{E1}
Z_i \in 
\left[\frac{a^{y_i+1}}{a^{k+1}}, \frac{a^{y_i+2}}{a^{k+1}}\right). 
\end{equation} 

The next lemma shows that $B$ contains an (empty) canonical box
with side lengths 
\begin{equation} \label{E2}
X_i= \frac{a^{y_i}}{a^{k+1}}, \ i=1,\ldots,d, 
\end{equation} 
at some position in the canonical grid associated with it. 
We call such a canonical box contained in a maximum-volume empty box,
a  {\em large canonical box}. Two key properties of large
canonical boxes are proved in Lemma~\ref{L9} and Lemma~\ref{L8}.

\begin{lemma} \label{L7}
If for each $i=1,\ldots,d$, 
the extent in the $i$th coordinate of $B$ belongs to the interval
as in \eqref{E1}, then $B$ contains an (empty) large canonical box
$B_0$ with side lengths as in \eqref{E2} at some position in the
canonical grid associated with it.    
\end{lemma}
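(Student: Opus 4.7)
The plan is to show, coordinate by coordinate, that the canonical grid associated with $B_0$ is fine enough that some grid position places $B_0$ entirely inside $B$. Since each coordinate is independent (both $B$ and the grid are axis-parallel), the only thing to check is the one-dimensional statement: in each coordinate $j$, an interval of length $Z_j - X_j$ contains some multiple of $X_j/m$.

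I would write $B = \prod_{j=1}^d (\ell_j, \ell_j + Z_j)$ with $\ell_j \geq 0$ and $\ell_j + Z_j \leq 1$. Placing $B_0$ with lowest corner at $(i_1 X_1/m,\dots,i_d X_d/m)$ makes $B_0 \subseteq B$ if and only if for each $j$,
\[
\ell_j \;\leq\; \frac{i_j X_j}{m} \;\leq\; \ell_j + Z_j - X_j.
\]
The relevant interval $[\ell_j, \ell_j + Z_j - X_j]$ has length $Z_j - X_j$, and the grid spacing in coordinate $j$ is $X_j/m$. So such an $i_j \geq 0$ exists provided $Z_j - X_j \geq X_j/m$, because any interval of length at least the grid spacing contains some multiple of it.

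The key numerical check is therefore $Z_j - X_j \geq X_j/m$. By hypothesis \eqref{E1} and the definition \eqref{E2}, $Z_j \geq a X_j$, so $Z_j - X_j \geq (a-1) X_j$. By the choice of parameters \eqref{E14}, $a - 1 = \delta/(1-\delta) \geq \delta$ and $1/m \leq \delta$, so $a - 1 \geq 1/m$. Multiplying by $X_j > 0$ gives the required inequality. Picking the smallest such $i_j \geq 0$ (say $i_j = \lceil m \ell_j / X_j \rceil$) works, and since $i_j X_j/m + X_j \leq \ell_j + Z_j \leq 1$, the resulting canonical grid point lies in $U_d$, so the grid position is valid.

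Combining these coordinate-wise choices yields a canonical grid position for which $B_0 \subseteq B$. Because $B$ is empty and contained in $R = U_d$, the box $B_0$ is likewise empty and contained in $R$, completing the argument. The only nonroutine point is the inequality $a - 1 \geq 1/m$, but it is an immediate consequence of the parameter setup.
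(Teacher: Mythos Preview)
Your proof is correct and follows essentially the same route as the paper: reduce to a one-dimensional containment in each coordinate and use the parameter inequality $a-1\geq 1/m$ (equivalently, the paper's $a>1+\delta$ from \eqref{E13}) to guarantee that the first canonical grid position inside $I$ keeps $I_0$ within $I$. The only cosmetic difference is that the paper argues by contradiction (assuming the first grid placement fails and deriving $a<1+\delta$), while you argue directly that the interval of admissible left endpoints has length at least the grid spacing.
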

\begin{proof}
It is enough to prove the containment for each coordinate axis $i$. 
Let $I$ and $I_0$ be the corresponding intervals of $B$
and $B_0$, respectively.
Assume for contradiction that the placement of $I_0$ with its left end
point at the first canonical grid position in $I$ is not contained in 
$I$. But then we have, by taking into account the grid cell lengths:
$$ \frac{a^{y_i+1}}{a^{k+1}} \leq |I| <
|I_0| + \frac{|I_0|}{m} \leq |I_0| + \delta \cdot |I_0| =
\left(1+\delta \right) \frac{a^{y_i}}{a^{k+1}}, $$
and consequently,
$$ a < 1+\delta. $$
We reached a contradiction to the 2nd inequality in \eqref{E13}, and
the proof is complete. 
\end{proof}

We now show that the (empty) large canonical box $B_0 \subset B$ from the
previous lemma yields a $(1-\eps)$-approximation for the empty box $B$
of maximum volume.

\begin{lemma} \label{L9}
$$ \vol(B_0) \geq (1-\eps) \cdot \vol(B). $$
\end{lemma}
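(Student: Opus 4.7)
The plan is to bound $\vol(B_0)/\vol(B)$ directly, coordinate by coordinate, and then reduce the resulting product estimate to a one-line application of Bernoulli's inequality. By \eqref{E2}, the $i$th side length of $B_0$ is $X_i = a^{y_i}/a^{k+1}$, while by \eqref{E1} the corresponding side length $Z_i$ of $B$ satisfies $Z_i < a^{y_i+2}/a^{k+1} = a^2 X_i$. Multiplying these $d$ inequalities yields $\vol(B) < a^{2d}\, \vol(B_0)$, equivalently $\vol(B_0) > a^{-2d}\, \vol(B) = (1-\delta)^{2d}\, \vol(B)$, where the last equality uses the definition $a = 1/(1-\delta)$ from \eqref{E14}.

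It then suffices to verify that $(1-\delta)^{2d} \geq 1-\eps$. Here I would invoke Bernoulli's inequality $(1-x)^n \geq 1 - nx$, valid for $x \in [0,1]$ and integer $n \geq 1$, with $x = \delta$ and $n = 2d$; since $\delta = \eps/(2d)$, this gives $(1-\delta)^{2d} \geq 1 - 2d\delta = 1 - \eps$, which closes the argument.

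There is essentially no obstacle: once Lemma~\ref{L7} has produced a canonical box $B_0$ nested inside $B$ with side lengths given by \eqref{E2}, the estimate reduces to a short calculation. The only point worth flagging is why the exponent is $2d$ rather than $d$: both the lower and upper endpoints of the interval in \eqref{E1} are \emph{consecutive} powers of $a$, so the loss in each coordinate is a full factor of $a^2$. This is exactly why the parameters in \eqref{E14} were calibrated with $\delta = \eps/(2d)$, so that Bernoulli's inequality delivers precisely the target ratio $1-\eps$ and not something weaker.
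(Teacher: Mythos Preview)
Your proof is correct and is essentially identical to the paper's own argument: both compare side lengths via \eqref{E1} and \eqref{E2} to obtain $\vol(B_0) \geq a^{-2d}\,\vol(B)$, then rewrite $a^{-2d} = (1-\delta)^{2d}$ and apply Bernoulli's inequality with $\delta = \eps/(2d)$ to conclude. There is no substantive difference.
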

\begin{proof}
By \eqref{E2} and \eqref{E1}, 
$$ \vol(B_0) = \prod_{i=1}^d \frac{a^{y_i}}{a^{k+1}} =
\frac{1}{a^{2d}} \prod_{i=1}^d \frac{a^{y_i+2}}{a^{k+1}} \geq
\frac{1}{a^{2d}} \cdot \vol(B). $$
It remains to be shown that 
$$ \frac{1}{a^{2d}} \geq 1-\eps. $$
But this follows from our choice of $a$ and from Bernoulli's
inequality: 
$$ (1+x)^q \geq 1+qx, {\rm \ for \ any \ } x \geq -1, 
{\rm \ and \ any \ positive \ integer \ } q. $$
Indeed,
$$ \frac{1}{a^{2d}} = \left(1-\frac{\eps}{2d}\right)^{2d} \geq
1- 2d \cdot \frac{\eps}{2d}= 1-\eps, $$
and the proof of Lemma~\ref{L9} is complete.
\end{proof}

Observe that the number of canonical boxes in $\B$ is exactly $k^d$,
and by~\eqref{E20} is bounded from above as follows:
\begin{equation} \label{E16}
k^d \leq \left(\frac{2d}{\eps} \right)^d \cdot \log^d{n}.
\end{equation}
We can prove however a better upper bound on the number of large canonical
boxes. 

\begin{lemma} \label{L8}
The number of large canonical boxes in $\B$ is at most
$$ \left(\frac{2e}{\eps}\right)^d \cdot \log^d{n}. $$
\end{lemma}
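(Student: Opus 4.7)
The plan is to identify the Diophantine constraint satisfied by the tuple $(y_1, \ldots, y_d) \in \{0, 1, \ldots, k-1\}^d$ that parametrizes a large canonical box, and then to count the admissible tuples via a standard binomial estimate.

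First I would derive the arithmetic bound on $\sum y_i$. A large canonical box $B_0$ (with $X_i = a^{y_i}/a^{k+1}$) sits inside a maximum-volume empty box $B$ whose extents satisfy $Z_i < a^{y_i+2}/a^{k+1}$ by \eqref{E1}. Multiplying across coordinates gives $\vol(B) < \prod_i a^{y_i+2}/a^{k+1}$. Combining with $\vol(B) \ge 1/(n+1) > 1/a^k$ (where the second inequality uses $n+1 < a^k$ from \eqref{E28}), I obtain $\prod_i a^{y_i+2}/a^{k+1} > 1/a^k$, which rearranges to $\sum y_i > (d-1)k - d$, equivalently $\sum y_i \geq (d-1)k - d + 1$ since both sides are integers.

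Next, I substitute $z_i = (k-1) - y_i \ge 0$, turning the constraint into $\sum z_i \leq k - 1$ with $z_i \in \NN$. The count of such non-negative integer $d$-tuples is the classical ``balls in bins'' number $\binom{k+d-1}{d}$. I then apply the standard estimate $\binom{N}{d} \leq (eN/d)^d$ with $N = k+d-1$, and invoke the upper bound $k \leq (2d/\eps) \log n$ from \eqref{E20} to reach the claimed bound $(2e/\eps)^d \log^d n$.

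The main obstacle is making the last step yield precisely the constant $2e/\eps$ rather than a slightly larger one. The crude estimate $(e(k+d-1)/d)^d$ carries an additive $d$ term inside the base, which must be absorbed into $\log n$ by noting that $k \gg d$ in the dominant regime (formally $k \geq 15$ by the parameter setup, and $k$ grows with $n$ while $d$ is fixed), and by keeping careful track of the lower-order corrections: one writes $e(k+d-1)/d \leq e(k/d + 1)$ and uses $k/d \leq (2/\eps)\log n$ to conclude that this is at most $(2e/\eps)\log n$ up to a factor that tends to $1$ as $n$ grows.
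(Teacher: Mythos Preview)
Your approach is the same as the paper's: bound $\sum y_i$ from below via the volume constraint, substitute $z_i = k-1-y_i$ to get $\sum z_i \leq k-1$ (the paper, using the non-strict form of the inequality, obtains $\sum z_i \leq k$), count the tuples as $\binom{k+d-1}{d}$ (the paper gets $\binom{k+d}{d}$), and apply $\binom{N}{d} \leq (eN/d)^d$.

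The only gap is precisely the one you flag as the ``main obstacle'' and then leave unresolved: you end with the bound holding ``up to a factor that tends to $1$ as $n$ grows'', whereas the lemma asserts the exact inequality. From \eqref{E20} alone, $k \leq (2d/\eps)\log n$, you get $e(k+d-1)/d \leq (2e/\eps)\log n + e$, and that additive $e$ does not disappear after raising to the $d$th power. The paper does \emph{not} use \eqref{E20} at this point; it returns to the sharper intermediate estimate \eqref{E5}, namely $k-1 \leq 0.78\,\log(n+1)/\delta$, and exploits the slack $0.78 < 1$ to absorb the extra $+d$: under the standing assumptions $n\geq 12$, $d\geq 3$, $\eps\leq 1$, a one-line calculation gives $k+d \leq (2d/\eps)\log n$ exactly (this is \eqref{E15}). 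With that in hand, $e(k+d-1)/d < e(k+d)/d \leq (2e/\eps)\log n$, and the claimed bound follows without any $(1+o(1))$ fudge.
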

\begin{proof}
Recall that $\vol(B)$ satisfies
$$ \frac{1}{a^k} < \frac{1}{n+1} \leq \vol(B) \leq
\prod_{i=1}^d \frac{a^{y_i+2}}{a^{k+1}} =
\frac{a^{2d} \prod_{i=1}^d a^{y_i}}{a^{dk+d}}=
\frac{a^{d} \prod_{i=1}^d a^{y_i}}{a^{dk}}, $$
for some integers $y_i \in \{0,1,\ldots,k-1\}$. 
It follows immediately that
\begin{equation} \label{E6}
dk-k-d \leq \sum_{i=1}^d y_i \leq dk-d, 
\end{equation} 
and we want an upper bound on the number of solutions of  \eqref{E6}.
Make the substitution $z_i=k-1-y_i$, for $i=1,2,\ldots,d$. 
So $z_i \in \{0,1,\ldots,k-1\}$, for $i=1,2,\ldots,d$. 
The above inequalities are equivalent to
\begin{equation} \label{E29}
0 \leq \sum_{i=1}^d z_i \leq k. 
\end{equation} 

Let $t$ be a nonnegative integer. It is well-known (see for instance 
\cite{T95}) that the number of
nonnegative integer solutions of the equation $ \sum_{i=1}^d z_i =t$ 
equals ${t+d-1 \choose d-1}$, that is, when we ignore the upper bound
on each $z_i$. Summing over all values of $t \in \{0,1,\ldots,k\}$,
and using a well-known binomial identity (see for instance 
\cite[p. 217]{T95}) 
yields that the number of solutions of \eqref{E29}, hence also of \eqref{E6},
is no more than
$$ \sum_{t=0}^{k} {t+d-1 \choose d-1} = {k+d-1+1 \choose d-1+1} =
{k+d \choose d} . $$
A well-known upper bound approximation for binomial coefficients
$$ {n \choose k} \leq \left(\frac{en}{k}\right)^k, $$
for positive integers $n$ and $k$ with $1 \leq k \leq n$, further yields that
\begin{equation} \label{E7}
{k+d \choose d} \leq \left(\frac{e(k+d)}{d}\right)^d =
e^d \left(\frac{k+d}{d}\right)^d. 
\end{equation} 

We now check that 
$$ k+d \leq \frac{\log{n}}{\delta}. $$
Recall inequality \eqref{E5}. A straightforward calculation 
(where we use $n \geq 12$, $d \ge 3$, and $\eps \leq 1$), gives
\begin{equation} \label{E15}
k+d \leq \frac{0.78 \log(n+1)}{\delta} + d+1 =
\frac{0.78 \log(n+1)+\frac{d+1}{2d}\eps}{\delta} \leq
\frac{\log{n}}{\delta}= \frac{2d}{\eps} \cdot \log{n},
\end{equation} 
as claimed. Substituting this upper bound into \eqref{E7} yields
\begin{equation} \label{E9}
{k+d \choose d} \leq 
e^d \left(\frac{2d}{d \eps}\cdot \log{n}\right)^d=
\left(\frac{2e}{\eps}\right)^d \cdot \log^d{n}, 
\end{equation} 
as required. This expression is an upper bound on the number of 
solutions of \eqref{E6}, hence also on the number of large canonical
boxes in $\B$.  
\end{proof}

Given a grid with cell lengths $x_1,x_2,\ldots,x_d$, we superimpose it so that the
origin of $U_d$ is a grid point of the above grid. Denote the
corresponding grid cells by index tuples $(i_1,i_2,\ldots,i_d)$, 
where $i_1,i_2,\ldots,i_d \geq 0$. 
Note that some of the grid cells on the boundary of $U_d$ may be
smaller. Given a grid $G$ superimposed on $U_d$, let $M(G)$ be the
number of cells (with nonempty interior) into which $U_d$ is partitioned. 

Consider a (fixed) canonical box, say $B_0$,  with side lengths as in
\eqref{E2}. The associated canonical grid, say $G_0$, has side lengths
$m$ times smaller in each coordinate. 
We now derive an upper bound on the number of canonical
grid positions where a canonical box is placed and tested for emptiness,
according to \eqref{E10}. 

\begin{lemma} \label{L10}
The number of canonical grid positions for placing $B_0$ in $G_0$ is
bounded as follows: 
$$ M(G_0) \leq 12 \cdot \left( \frac{2d}{\eps} \right)^d \cdot n. $$
\end{lemma}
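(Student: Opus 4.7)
The plan is to express $M(G_0)$ as a product over coordinate axes, bound each factor in terms of the side lengths $X_i$ of $B_0$, and then invoke the lower bound on $\vol(B_0)$ forced by the large canonical condition. Since along the $i$th coordinate the canonical grid $G_0$ has cell length $X_i/m$, the unit hypercube $U_d$ is partitioned into $\lceil m/X_i\rceil$ slabs, giving
\begin{equation*}
M(G_0) \;=\; \prod_{i=1}^{d} \lceil m/X_i \rceil.
\end{equation*}

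To control each ceiling, observe that $X_i = a^{y_i}/a^{k+1} \leq 1/a^2 < 1$ (since $y_i \leq k-1$), and $m \geq 1/\delta$, so $X_i/m \leq \delta$. Hence
\begin{equation*}
\lceil m/X_i \rceil \;\leq\; \frac{m}{X_i}\,(1+X_i/m) \;\leq\; \frac{m}{X_i}\,(1+\delta) \;\leq\; \frac{a\,m}{X_i},
\end{equation*}
using $1+\delta \leq a$ from~\eqref{E13}. Taking the product and recalling $\vol(B_0) = \prod_i X_i$ yields $M(G_0) \leq a^d m^d / \vol(B_0)$.

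The remaining ingredient is a lower bound on $\vol(B_0)$. Since $B_0$ is large canonical, the exponents $y_i$ satisfy $\sum_{i=1}^d y_i \geq dk-k-d$, derived in the proof of Lemma~\ref{L8} (inequality~\eqref{E6}). Therefore $\vol(B_0) = a^{\sum y_i}/a^{d(k+1)} \geq 1/a^{k+2d}$. Combined with $a^{k-1}\leq n+1$ from~\eqref{E28}, this gives $\vol(B_0) \geq 1/((n+1)\,a^{2d+1})$, and substituting back produces
\begin{equation*}
M(G_0) \;\leq\; a^{3d+1}\, m^d\,(n+1).
\end{equation*}

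It remains to verify $a^{3d+1} m^d (n+1) \leq 12\,(2d/\eps)^d\, n$. The factor $m^d$ is bounded by $(2d/\eps)^d(1+\delta)^d \leq (2d/\eps)^d e^{\eps/2}$, while $a^{3d+1}$ is an absolute constant under the standing parameter assumptions, via Bernoulli's inequality applied to $(1-\delta)^{3d+1} \geq 1-(3d+1)\delta$. The main (and only) obstacle is the routine numerical check that these multiplicative constants---together with the harmless factor $(n+1)/n \leq 13/12$ from $n \geq 12$---combine to at most $12$; the conceptual content is entirely in the three ingredients above, namely the grid-cell product formula, the ``$+1 \leq a$'' rounding bound enabled by $m \geq 1/\delta$, and the transformation of the lower bound on $\vol(B_0)$ into an $n$-dependence via~\eqref{E28}.
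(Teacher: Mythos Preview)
Your approach is essentially the paper's: write $M(G_0)$ as a product of $\lceil m/X_i\rceil$, absorb the rounding, and then use inequality~\eqref{E6} to convert $\prod_i X_i^{-1}$ into a power of $a$ bounded via~\eqref{E28}. The only structural difference is your rounding step $\lceil m/X_i\rceil \le a\,m/X_i$ versus the paper's $\lceil m/X_i\rceil \le (m+1)/X_i$; the latter is slightly tighter and leads to $(m+1)^d\,a^{2d}\,a^k$ rather than your $m^d\,a^{3d+1}(n+1)$.

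There is, however, a genuine gap in your numerical wrap-up. Your stated route to bounding $a^{3d+1}$---Bernoulli in the form $(1-\delta)^{3d+1}\ge 1-(3d+1)\delta$---is vacuous precisely in the regime you need: with $\delta=\eps/(2d)$ one has $(3d+1)\delta \ge \tfrac{3}{2}\eps$, so for $d=3$ and $\eps$ close to $1$ the right side is negative and the inequality tells you nothing about $a^{3d+1}$. You therefore cannot conclude that $a^{3d+1}$ is an ``absolute constant'' from this argument. A correct bound (e.g., via $a\le 1+\tfrac{6}{5}\delta$ from~\eqref{E12} and $a^{3d+1}\le e^{(6/5)(3d+1)\delta}$) does give a uniform constant, but that exponential estimate is already loose enough that combined with your $m^d\le (2d/\eps)^d e^{\eps/2}$ it overshoots $12$. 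The claim \emph{is} true under the standing assumptions $d\ge 3$, $0<\eps<1$, $n\ge 12$ (the worst case is roughly $d=3$, $\eps\to 1^-$, where $a^{10}(m\eps/(2d))^3(n+1)/n\approx 10.7$), but establishing it requires a more careful estimate than the one you sketched; calling it a ``routine numerical check'' understates the work. Tightening your rounding bound to $X_i/m\le \delta/a^2$ (since $X_i\le a^{-2}$) would bring you back in line with the paper's constants and make the check genuinely routine.
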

\begin{proof}
We have
$$ M(G_0) \leq 
\prod_{i=1}^d \left \lceil \frac{m \cdot a^{k+1}}{a^{y_i}} \right \rceil \leq
\prod_{i=1}^d \left(\frac{m \cdot a^{k+1}}{a^{y_i}} +1 \right). 
$$
Observe that
$$ \frac{m \cdot a^{k+1}}{a^{y_i}} +1 = 
\frac{m \cdot a^{k+1}+a^{y_i} }{a^{y_i}} \leq 
\frac{m \cdot a^{k+1}+a^{k-1} }{a^{y_i}} \leq 
\frac{(m+1) a^{k+1}}{a^{y_i}}. 
$$
By substituting this bound in the product we get that
\begin{align} \label{E11}
M(G_0) &\leq \prod_{i=1}^d \frac{(m+1) \cdot a^{k+1}}{a^{y_i}} =
(m+1)^d \prod_{i=1}^d \frac{a^{k+1}}{a^{y_i}} = 
(m+1)^d \cdot \frac{a^{kd+d}}{\prod_{i=1}^d a^{y_i}} \nonumber \\
&\leq (m+1)^d \cdot \frac{a^{kd+d}}{a^{kd-k-d}} = 
(m+1)^d \cdot a^{2d} \cdot a^k. 
\end{align} 
For the last inequality above we used \eqref{E6}.
We now bound from above each of the three factors in \eqref{E11}.
For bounding the second and the third factors we use inequalities
\eqref{E12} and \eqref{E28}, respectively.
$$ (m+1)^d = \left( \left \lceil \frac{2d}{\eps} \right \rceil+1\right) ^d \leq
\left( \frac{2d}{\eps} +2 \right)^d = 
\left( \frac{2d}{\eps} \right)^d \left( 1+ \frac{\eps}{d} \right)^d \leq
\left( \frac{2d}{\eps} \right)^d \cdot e^{\eps} \leq
e \left( \frac{2d}{\eps} \right)^d. $$
$$  a^{2d} \leq \left(1+ \frac{3\eps}{5d} \right)^{2d} \leq
e^{6\eps/5} \leq e^{6/5}. $$
$$ a^k = a \cdot a^{k-1} \leq  a(n+1) \leq \frac{6}{5} \cdot (n+1)
\leq \frac{13n}{10}, {\rm \ for \ } n \geq 12. $$
Substituting these upper bounds in \eqref{E11} gives the desired bound:
$$ M(G_0) \leq e^{11/5} \left( \frac{2d}{\eps} \right)^d \cdot \frac{13n}{10} \leq
12 \cdot \left( \frac{2d}{\eps} \right)^d \cdot n. 
$$

\vspace{-2\baselineskip}
\end{proof}

%\vspace{-0.5\baselineskip}
\paragraph{Testing canonical boxes for emptiness.}

Given a grid with cell lengths $x_1,x_2,\ldots,x_d$, denote the corresponding 
grid {\em cell counts} or {\em cell numbers}
(i.e., the number of points) in cell
$(i_1,i_2,\ldots,i_d)$ by $n(i_1,i_2,\ldots,i_d)$.
For simplicity, we can assume w.l.o.g.\ that in all the grids that are
generated by the algorithm, no point of $S$ lies on a grid cell
boundary. Indeed the points of $S$ on the boundary of $R=U_d$ can be
safely ignored, and the above condition holds with probability $1$
if instead of the given $\eps$, the algorithm uses a value chosen
uniformly at random from the interval $[(1-\frac{1}{2d})\eps, \ \eps]$;
see also the setting of the parameters in \eqref{E14}. 
Given a grid $G$, and dimensions (array sizes) $M_1,\ldots,M_d \geq 1$, 
a floating box at some position aligned with it, that is,
whose lower left corner is a grid point, and with the specified 
dimensions is called a {\em grid box}. All the canonical boxes
generated by our algorithm are in fact grid boxes. 

The next four lemmas (\ref{L3},~\ref{L4},~\ref{L5},~\ref{L6})
outline the method we use for efficiently computing the 
number of points in $S$ in a rectangular box, over a sequence 
of boxes. In particular these boxes can be tested for emptiness within
the same specified time. 

\begin{lemma} \label{L3}
Let $G$ be a grid with cell lengths $x_1,x_2,\ldots,x_d$, 
superimposed on $U_d$, with $M(G)$ cells.
Then the number of points of $S$ lying in each cell, over all cells,
can be computed in $O(d \cdot n + M(G))$ time. 
\end{lemma}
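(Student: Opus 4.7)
The plan is to compute the cell counts by a direct bucketing procedure: first allocate and zero out a $d$-dimensional array $N[\cdot]$ of cell counts of total size $M(G)$, and then walk through the $n$ points of $S$, incrementing the appropriate counter for each.

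More concretely, let $M_i$ denote the number of cells of $G$ along coordinate $i$, so that $M(G)=\prod_{i=1}^{d}M_i$. I would represent $N$ as a single linear array of length $M(G)$, with the cell $(i_1,\ldots,i_d)$ stored at linear index $L(i_1,\ldots,i_d)=i_1+M_1 i_2+M_1 M_2 i_3+\cdots+\bigl(\prod_{j=1}^{d-1}M_j\bigr)i_d$. The strides $\prod_{j=1}^{s}M_j$ can be precomputed in $O(d)$ time. Initializing $N$ to zero takes $O(M(G))$ time. Then, for each point $p=(p_1,\ldots,p_d)\in S$, I would compute the cell indices $i_s=\lfloor p_s/x_s\rfloor$ for $s=1,\ldots,d$ in $O(d)$ arithmetic operations on the real-RAM, evaluate $L(i_1,\ldots,i_d)$ in $O(d)$ additional operations via Horner's scheme, and increment $N[L(i_1,\ldots,i_d)]$ by one. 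Processing all $n$ points therefore costs $O(dn)$, and the total running time is $O(dn+M(G))$ as claimed.

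Two minor points need to be addressed. First, one must check that the indices $i_s$ lie in the valid range $\{0,1,\ldots,M_s-1\}$; this is automatic because the grid is superimposed on $U_d=[0,1]^d$ and all points of $S$ lie in $U_d$, and the assumption (already stated before the lemma) that no point of $S$ falls on a cell boundary rules out the degenerate case $p_s/x_s=M_s$. Second, the cells on the boundary of $U_d$ may be shorter than $x_s$, but that only affects the width of the last cell in each coordinate and does not affect the index computation $\lfloor p_s/x_s\rfloor$.

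The only potential obstacle is the $O(M(G))$ initialization cost when $M(G)\gg n$, but this is already absorbed into the stated bound, so there is nothing to optimize away. The analysis is therefore a straightforward accounting, and no nontrivial data structure is required.
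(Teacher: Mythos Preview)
Your proof is correct and follows the same approach as the paper: initialize an array of $M(G)$ counters to zero, then for each point compute its cell index in $O(d)$ time via the floor function and increment the corresponding counter. The extra details you supply (linear indexing, Horner evaluation, handling of boundary cells) are sound but not needed beyond what the paper sketches.
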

\begin{proof}
The number of points in each cell of $M(G))$ is initialized to $0$. 
For each point $p \in S$, its cell index tuple (label) is computed
in $O(d)$ time using the floor function for each coordinate, 
and the corresponding cell count is updated. 
\end{proof}

\noindent{\bf Remark.} If the floor function is not an option,
the number of points in each cell can be computed using binary search 
for each coordinate. The resulting time complexity is
$O(n \cdot \log{M(G)})$. 

\medskip
Denote by $N(i_1,i_2,\ldots,i_d)$ the number of points in $S$ in the
box with lower left cell $(0,0,\ldots,0)$, and upper right cell
$(i_1,i_2,\ldots,i_d)$; we refer to the numbers $N(i_1,i_2,\ldots,i_d)$ 
as {\em corner box} numbers.

\begin{lemma} \label{L4}
Given a grid $G$ with cell lengths $x_1,x_2,\ldots,x_d$ placed at the origin, 
with $M(G)$ cells, and grid cell counts $n(i_1,i_2,\ldots,i_d)$,  over all
cells, the corner box numbers $N(i_1,i_2,\ldots,i_d)$, over all cells,
can be computed in $O(2^d \cdot M(G))$ time.  
\end{lemma}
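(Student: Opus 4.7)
}
My plan is to compute the corner box numbers by dynamic programming over the cell grid, using a $d$-dimensional analogue of the prefix-sum recurrence. For each index tuple $(i_1,\ldots,i_d)$, the corner box decomposes into the single cell at $(i_1,\ldots,i_d)$ together with cells in which at least one coordinate index is strictly less than the corresponding $i_k$. Writing $A_k$ for the set of cells $(j_1,\ldots,j_d)$ satisfying $j_k \leq i_k-1$ and $j_l \leq i_l$ for $l\neq k$, the complement of the top cell inside the corner box is exactly $A_1 \cup \cdots \cup A_d$.

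The first step is to observe that for any nonempty $T \subseteq \{1,\ldots,d\}$ the intersection $\bigcap_{k \in T} A_k$ is itself a corner box, namely the one with index tuple obtained from $(i_1,\ldots,i_d)$ by decrementing exactly the coordinates indexed by $T$. Inclusion--exclusion then yields the recurrence
\[
N(i_1,\ldots,i_d) \;=\; n(i_1,\ldots,i_d) \;+\; \sum_{\emptyset \neq T \subseteq \{1,\ldots,d\}} (-1)^{|T|+1}\, N\bigl(i_1 - [1 \in T],\,\ldots,\, i_d - [d \in T]\bigr),
\]
with the convention that any $N$-value having a negative index is $0$.

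The second step is algorithmic: process the cells in any order consistent with the coordinatewise partial order (for example, lexicographic order on $(i_1,\ldots,i_d)$), so that whenever $N(i_1,\ldots,i_d)$ is to be computed, all of the at most $2^d-1$ values on the right-hand side have already been stored. Evaluating the recurrence at each of the $M(G)$ cells costs $O(2^d)$ arithmetic operations and lookups, giving the claimed $O(2^d \cdot M(G))$ total running time. Initialization of the $N$-array to $0$ is absorbed in this bound.

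The only subtle point, which I would verify carefully, is the bookkeeping for the boundary: when some $i_k = 0$, the ``decremented'' index tuple is formally out of range for every $T$ containing $k$, and these terms must contribute $0$. Treating out-of-range entries as $0$ makes the recurrence collapse correctly to the standard 1D prefix-sum on each face, and in particular reduces to $N(i_1,\ldots,i_d) = n(i_1,\ldots,i_d)$ when all indices are $0$. With this convention the induction on any monotone sweep order is straightforward, and the running time and correctness of the procedure follow.
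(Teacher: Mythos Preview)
Your proof is correct and essentially identical to the paper's: both derive the same inclusion--exclusion recurrence (your subsets $T$ correspond to the paper's nonzero binary vectors $b$, with $(-1)^{|T|+1}=(-1)^{\pi(b)+1}$), adopt the same convention that $N$ vanishes on negative indices, and count $O(2^d)$ work per cell over $M(G)$ cells. Your exposition is a bit more explicit about the sweep order and boundary cases, but there is no substantive difference.
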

\begin{proof}
Define $N(i_1,i_2,\ldots,i_d)=0$, if $i_j<0$ for some $j$. 
Let $b=(b_1,b_2,\ldots,b_d) \in \{0,1\}^d$ be a binary vector.
Let the {\em parity} of $b$ be 
$\pi(b)=b_1 \oplus b_2 \oplus \cdots \oplus b_d$. 
By the inclusion-exclusion principle, the corner box numbers
are given by the following formula with at most $2^d$ operations:
$$ N(i_1,i_2,\ldots,i_d)= n(i_1,i_2,\ldots,i_d)+
\sum_{\substack
{b=(b_1,b_2,\ldots,b_d)\\ b \neq (0,0,\ldots,0)}} 
(-1)^{\pi(b)+1} N(i_1-b_1,i_2-b_2,\ldots,i_d-b_d). $$
Since $G$ has $M(G)$ cells, the bound follows.
\end{proof}

\begin{lemma} \label{L5}
Given is a grid $G$ with cell lengths $x_1,x_2,\ldots,x_d$ placed at the origin, 
with $M(G)$ cells, and corner box numbers $N(i'_1,i'_2,\ldots,i'_d)$,
over all cells. Let $B_0$ be a (canonical) grid box with dimensions
(array sizes) $M_1,\ldots,M_d \geq 1$, and lower left cell
$(i_1,i_2,\ldots,i_d)$. Then the number of points of $S$ in $B_0$,
denoted $N(B_0)$, can be computed in $O(2^d)$ time. 
\end{lemma}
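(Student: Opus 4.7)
The plan is to prove this by a direct $d$-dimensional inclusion–exclusion, completely analogous to Lemma~\ref{L4} but applied at the level of the $2^d$ corners of the query box $B_0$ rather than the $2^d$ corners of a single cell.

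Concretely, set $u_\ell = i_\ell + M_\ell - 1$ for $\ell = 1,\ldots,d$, so that $B_0$ is exactly the union of the grid cells $(j_1,\ldots,j_d)$ satisfying $i_\ell \le j_\ell \le u_\ell$ for every $\ell$. For each binary vector $b=(b_1,\ldots,b_d) \in \{0,1\}^d$, let
$$I(b) = (u_1 - b_1 M_1,\ \ldots,\ u_d - b_d M_d),$$
so that $b_\ell = 0$ picks the upper index $u_\ell$ and $b_\ell = 1$ picks the index $i_\ell - 1$ just below $B_0$ in coordinate $\ell$. Using the convention $N(\cdot)=0$ whenever some coordinate is negative (already introduced in the proof of Lemma~\ref{L4}), I would then show
$$N(B_0) \;=\; \sum_{b \in \{0,1\}^d} (-1)^{\pi(b)}\, N\bigl(I(b)\bigr).$$

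The correctness argument is the standard orthant identity: for a fixed coordinate $\ell$, the number of points with $j_\ell$ in the slab $[i_\ell,u_\ell]$ equals the corner-box count truncated at $u_\ell$ minus the corner-box count truncated at $i_\ell-1$. Applying this one-dimensional difference independently in each of the $d$ coordinates and expanding the resulting product of differences yields exactly the $2^d$-term signed sum above, which can also be derived directly from Lemma~\ref{L4} by a telescoping over the cells of $B_0$.

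For the running time, the algorithm enumerates the $2^d$ vectors $b$, each time computing the index tuple $I(b)$, the sign $(-1)^{\pi(b)}$, and performing one lookup into the precomputed corner-box table followed by one addition or subtraction. Enumerating the $b$'s in Gray-code order lets each successive $I(b)$ and parity be updated from the previous one in $O(1)$ work, giving $O(2^d)$ operations overall. There is no real obstacle here beyond carefully checking the boundary convention $N(\cdot)=0$ for negative coordinates, which is what correctly accounts for the case when $B_0$ abuts the lower faces of $U_d$.
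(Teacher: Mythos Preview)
Your proposal is correct and is essentially the same argument as the paper's: both define the upper-right cell $u_\ell = i_\ell + M_\ell - 1$ (the paper calls it $j_\ell$) and apply inclusion--exclusion over the $2^d$ corner box numbers $N(u_1 - b_1 M_1,\ldots,u_d - b_d M_d)$ with sign $(-1)^{\pi(b)}$. The only differences are presentational: the paper writes the identity with $N(B_0)$ on the right-hand side and then ``extracts'' it, whereas you solve for $N(B_0)$ directly; and you add the Gray-code remark and a more explicit justification via one-dimensional differencing, neither of which the paper spells out.
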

\begin{proof}
Let $j_1=i_1+M_1-1, \ldots, j_d=i_d+M_d-1$ be the upper right cell of
$B_0$. By the inclusion-exclusion principle, the corner box number
$N(j_1,j_2,\ldots,j_d)$ can be computed as follows:
$$ N(j_1,j_2,\ldots,j_d)= N(B_0)+
\sum_{\substack
{b=(b_1,b_2,\ldots,b_d)\\ b \neq (0,0,\ldots,0)}} 
(-1)^{\pi(b)+1} N(j_1-b_1 M_1,j_2-b_2 M_2,\ldots,j_d-b_d M_d). $$
Hence $N(B_0)$ can be extracted from the above formula with at most $2^d$
operations. 
\end{proof}

Let $Q(i_1,i_2,\ldots,i_d)$ be the number of points in $S$ in the
canonical box of dimensions (array sizes) $M_1,\ldots,M_d \geq 1$, 
and lower left cell $(i_1,i_2,\ldots,i_d)$.

\begin{lemma} \label{L6}
Given is a grid $G$ with cell lengths $x_1,x_2,\ldots,x_d$ placed at the origin, 
with $M(G)$ cells, and corner box numbers $N(i'_1,i'_2,\ldots,i'_d)$,
over all cells. Then the numbers (counts)  $Q(i_1,i_2,\ldots,i_d)$,
over all cells, can be computed in $O(2^d \cdot M(G))$ time. 
\end{lemma}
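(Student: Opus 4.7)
The plan is to reduce this lemma directly to Lemma~\ref{L5}. Observe that $Q(i_1,i_2,\ldots,i_d)$ is precisely the quantity $N(B_0)$ of Lemma~\ref{L5}, where $B_0$ is the (canonical) grid box with fixed dimensions $M_1,\ldots,M_d$ and lower left cell $(i_1,i_2,\ldots,i_d)$. So I would simply iterate over all valid positions of the lower left corner and, for each, invoke Lemma~\ref{L5}.

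More precisely, first I would enumerate the feasible cell tuples $(i_1,\ldots,i_d)$ such that the box of dimensions $M_1,\ldots,M_d$ starting at that cell fits inside $U_d$; these number at most $\prod_{j=1}^{d}(\text{cells along axis }j - M_j + 1) \le M(G)$. For each such tuple, I would compute the upper right corner $(j_1,\ldots,j_d)=(i_1+M_1-1,\ldots,i_d+M_d-1)$ and then, by Lemma~\ref{L5}, extract $Q(i_1,\ldots,i_d)$ from the corner box numbers $N(\cdot)$ (already assumed given) via the inclusion-exclusion formula over the $2^d$ binary vectors $b\in\{0,1\}^d$. Each such extraction costs $O(2^d)$ time.

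Multiplying the per-position cost $O(2^d)$ by the at most $M(G)$ positions yields the claimed $O(2^d\cdot M(G))$ bound. I do not anticipate a hard step here: the only mild subtlety is handling positions near the boundary, where the convention $N(i_1,\ldots,i_d)=0$ whenever some $i_j<0$ (introduced in the proof of Lemma~\ref{L4}) together with the analogous convention for indices beyond the grid keeps the inclusion-exclusion formula valid without special cases. Thus the proof is essentially an application of Lemma~\ref{L5} over all feasible lower left corners.
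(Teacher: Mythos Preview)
Your proposal is correct and matches the paper's approach exactly: the paper's proof is the single sentence ``There are $M(G)$ cells determined by $G$ in $U_d$, and for each, apply the bound of Lemma~\ref{L5}.'' Your added remarks on boundary handling and the count of feasible positions are sound elaborations of this same argument.
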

\begin{proof}
There are $M(G)$ cells determined by $G$ in $U_d$, and for each, apply
the bound of Lemma~\ref{L5}.
\end{proof}

%\vspace{-0.5\baselineskip}
\paragraph{The last step in the proof of Theorem~\ref{T4}.}

For each canonical box, say $B_0$, there is a unique associated
canonical grid, say $G_0$. The time taken to test $B_0$ for emptiness
and containment in $R$ when placed at all grid positions in $G_0$, is
obtained by adding the running times in lemmas~\ref{L3}, 
\ref{L4}, and~\ref{L6}:
\begin{equation} \label{E24}
O(d \cdot n + M(G_0) + 2^d \cdot M(G_0))=
O\left(2^d \cdot \left( \frac{2d}{\eps} \right)^d \cdot n \right)=
O\left(\left( \frac{4d}{\eps} \right)^d \cdot n \right), 
\end{equation} 
where we have used the upper bound on $M(G_0)$ in Lemma~\ref{L10}.
By multiplying this with the upper bound on the number of large canonical
boxes in Lemma~\ref{L8}, we get that the total running time of the 
approximation algorithm is 
\begin{equation} \label{E25}
O\left(\left(\frac{2e}{\eps}\right)^d \cdot \log^d{n}
\cdot \left( \frac{4d}{\eps} \right)^d \cdot n \right)=
O\left(\left( \frac{8 e d}{\eps^2} \right)^d \cdot n \cdot
\log^d{n} \right). 
\end{equation} 
The proof of Theorem~\ref{T4} is now complete.

\later{
\section{Concluding remarks}

Reducing the gap between the lower and upper bounds on $A_d(n)$,
particularly in higher dimensions remains an interesting problem. 
Other questions can be asked regarding the computational complexity
of computing a maximum-volume empty box. 
We list some specific questions and directions for further study:

\begin{itemize}
\itemsep 0.01in
\item [(1)] Is the dependence on $d$ necessary in the upper bound on
  $A_d(n)$ as given by our Theorem~\ref{T3}, or is $A_d(n) \leq
  \frac{C}{n}$, where $C$ is an absolute constant? As a preliminary
  question: Given $d$ points in the unit hypercube $[0,1]^d$, is there
  always an empty box of volume $C$, where $C$ is an absolute
  constant, or does $A_d(d)$ tend to zero with the dimension? 
\item [(2)] Most likely the dependence on $n$ of the running time of
  our approximation algorithms, for boxes and hypercubes, is close to
optimal. However, reducing the dependence on $d$ and $\eps$ in the
running time may extend the range of dimensions for which the
algorithm is practical.  
\end{itemize}
} % later

% ****************************************************************

%\newpage
\appendix

\section{Proof of Lemma~\ref{L2}} \label{app:L2}

To see that $A(4) \leq \frac{1}{4}$, consider the $4$ points
$(\frac{1}{4},\frac{1}{2})$, 
$(\frac{1}{2},\frac{1}{4})$,
$(\frac{1}{2},\frac{3}{4})$,
$(\frac{3}{4},\frac{1}{2})$,
%$(1/4,1/2)$, $(1/2,1/4)$, $(1/2,3/4)$, $(3/4,1/2)$, 
and check that the largest empty rectangle has area  $\frac{1}{4}$. 
Next we prove the lower bound. 
Let $S=\{p_1,p_2,p_3,p_4\}$ be a set of $4$ points, and assume without
loss of generality that they are in lexicographic order: 
$x(p_1) \leq x(p_2) \leq x(p_3) \leq x(p_4)$,  
and if $x(p_i) = x(p_j)$ for $i<j$, then $y(p_i)< y(p_j)$. 
We can also assume that $y(p_1) \leq y(p_2)$. Encode each possible such
$4$-point configuration by a permutation $\pi$ of $1,2,3,4$ as follows:
for $i<j$, $\pi(i) <\pi(j)$ if and only if $y(p_i) \leq y(p_j)$. 
For example $\pi=(2,4,3,1)$ encodes the configuration shown in
Fig.~\ref{f1}(right). 
\begin{figure} [htb]
\centerline{\epsfxsize=4in \epsffile{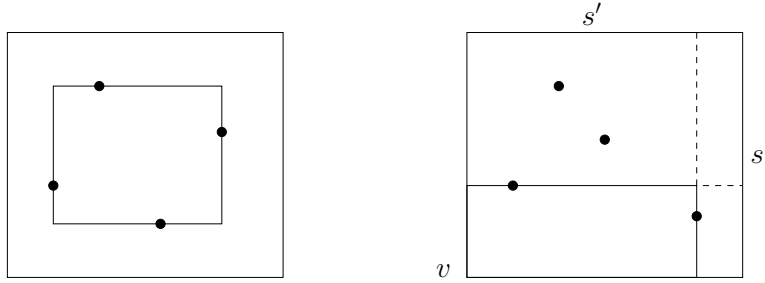}}
\caption{\small Left: $\pi=(2,4,1,3)$ is special.
Right: $\pi=(2,4,3,1)$ is non-special; $s$ is the right side of $U$, $v$
is the lower left corner of $U$, and $s'$ is the top side of $U$.} 
\label{f1}
\end{figure}

By our assumption $y(p_1) \leq y(p_2)$, there are only $12$
permutations (types) out of the total of $4! = 24$ to consider, those with
$\pi(1)<\pi(2)$. 
Two of these permutations, namely $(2,4,1,3)$ and $(3,4,1,2)$, are
called {\em special}: the $4$ points are in convex position and there
is an empty rectangle $R \subset U$, with one of these points on each side of
$R$. All the remaining $10$ permutations are called {\em non-special}.
We distinguish two cases: 

\medskip
{\em Case 1:} $S$ is encoded by a special permutation.
For each of the four sides $s$ of $U$, let $P(s)$ be the largest
empty rectangle containing $s$.
See Fig.~\ref{f1}(left) for an example. 
We can assume that the area of each rectangle 
$P(s)$ is smaller than $\frac{1}{4}$, since else we are done. But then it
follows that each of the four sides of $R$ is longer than 
$1-\frac{2}{4}=\frac{1}{2}$, so the area
of $R$ is larger than $\frac{1}{2} \cdot \frac{1}{2} = \frac{1}{4}$,
so this case is settled.

\medskip
{\em Case 2:} $S$ is encoded by a non-special permutation.
For each of the four vertices $v$ of $U$, let $Q(v)$ be the largest
empty rectangle having $v$ as a vertex. A routine verification shows that for
each of the $10$ non-special permutations there is a side $s$ of $U$
and a vertex $v$ of $U$ such that (i) $P(s)$ and $Q(v)$ have a common
boundary segment, and (ii) $v$ is an endpoint of the side opposite to $s$.
More precisely,
if $\pi$ is one of six permutations
$(1,2,3,4)$, $(1,2,4,3)$, $(1,3,2,4)$, 
$(1,3,4,2)$, $(1,4,2,3)$, $(1,4,3,2)$, 
then $s$ is the left side, and $v$ is the lower-right corner;
if $\pi$ is one of four permutations
$(2,3,1,4)$, $(2,3,4,1)$, $(2,4,3,1)$, $(3,4,2,1)$, 
then $s$ is the right side, and $v$ is the lower-left corner. 
See Fig.~\ref{f1}(right) for an example. 

As in Case 1, we can assume that
the area of $P(s)$ is smaller than $\frac{1}{4}$, thus its shorter side
is smaller than $\frac{1}{4}$. By the same token, one of the sides of $Q(v)$ 
is longer than $1-\frac{1}{4}=\frac{3}{4}$, hence the other side must
be shorter than $\frac{1}{3}$, 
since otherwise the area of $Q(v)$ would exceed $\frac{1}{4}$.
Let $s'$ be the side of $U$ adjacent to $s$ and disjoint from $v$.
Consequently, the rectangle $R'$ with side $s'$ and adjacent to $Q(v)$
has the other side longer than $1-\frac{1}{3}=\frac{2}{3}$. Observe
that $R'$ has at most two points in its interior. By Lemma~\ref{L1}
and Observation~\ref{O1}, $R'$ contains an empty rectangle of area at
least 
$$ \frac{2}{3} \cdot \xi =\frac{3-\sqrt{5}}{3} = 0.254\ldots > \frac{1}{4}, $$
as required. This concludes the analysis of the second case.

Thus in both cases, there is an empty rectangle of area at least $\frac{1}{4}$.
\qed

\section{Empty squares and hypercubes}\label{sec:square}

Define $A'_d(n)$ as the volume of the largest empty axis-parallel
hypercube (over all $n$-element point sets in in $[0,1]^d$), analogous
to $A_d(n)$ for the largest empty axis-parallel box.  
For simplicity we sometimes omit the subscript $d$ in the planar case
($d=2$).  That is, $A'(n)$ denotes the area of the largest empty 
axis-parallel square. Then for any fixed dimension $d$, our next
theorem shows that $A'_d(n) =\Theta\left(\frac{1}{n}\right)$, too: 

\begin{theorem} \label{T5}
For a fixed $d$, $A'_d(n) =\Theta\left(\frac{1}{n}\right)$. 
More precisely,
\begin{equation} \label{E31}
\frac{1}{(n^{1/d} + 1)^d} \leq A'_d(n) \leq
\frac{1}{(\lfloor n^{1/d} \rfloor + 1)^d}. 
\end{equation}
\end{theorem}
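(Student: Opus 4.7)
The plan is to prove both inequalities by elementary grid arguments in the spirit of Proposition~\ref{P1} and Theorem~\ref{T3}, specialized to hypercubes. For the lower bound, I set $m := \lfloor n^{1/d}\rfloor + 1$, so that $m > n^{1/d}$ and hence $m^d > n$. Partition $U_d = [0,1]^d$ into the $m^d$ half-open cells $[i_1/m,(i_1+1)/m) \times \cdots \times [i_d/m,(i_d+1)/m)$ that tile $U_d$; each of the $n$ points lies in exactly one such cell, so by pigeonhole some cell is empty. Its open interior is then an empty axis-parallel hypercube of side $1/m$ and volume $1/m^d = 1/(\lfloor n^{1/d}\rfloor+1)^d \geq 1/(n^{1/d}+1)^d$, which is the claimed lower bound (in fact a touch stronger).

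For the upper bound, I exhibit a tight $n$-point witness. Let $k := \lfloor n^{1/d}\rfloor$ and let $S$ consist of the $k^d \leq n$ regular grid points $\{(j_1/(k+1),\ldots,j_d/(k+1)) : j_i \in \{1,\ldots,k\}\}$, padded out to exactly $n$ points by placing the remaining $n - k^d$ points arbitrarily in $[0,1]^d$ (extra points can only shrink empty boxes). I then claim that any open axis-parallel hypercube $\prod_{i=1}^d (a_i, a_i + s) \subseteq [0,1]^d$ with side $s > 1/(k+1)$ must contain a grid point. Indeed, coordinatewise, the open interval $((k+1)a_i, (k+1)(a_i+s))$ is contained in $[0,k+1]$ and has length $s(k+1) > 1$, so it contains at least one integer; openness combined with $a_i \geq 0$ and $a_i + s \leq 1$ rules out the boundary values $0$ and $k+1$, forcing that integer into $\{1,\ldots,k\}$ and supplying a valid coordinate index $j_i$. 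Assembling the $j_i$ across coordinates places a grid point inside the hypercube, so every empty hypercube has side at most $1/(k+1)$ and hence volume at most $1/(\lfloor n^{1/d}\rfloor+1)^d$.

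No serious obstacle is anticipated. The only point requiring a touch of care is the standard open-versus-closed distinction: on the lower-bound side I use a half-open tiling so that each point of $S$ lies in exactly one cell, and on the upper-bound side openness of the empty box lets me exclude the boundary integers $0$ and $k+1$ when locating a grid point. Together, the two estimates in fact pin down $A'_d(n) = 1/(\lfloor n^{1/d}\rfloor+1)^d$ exactly, with the stated lower bound $1/(n^{1/d}+1)^d$ emerging simply as a cleaner closed-form consequence since $\lfloor n^{1/d}\rfloor \leq n^{1/d}$.
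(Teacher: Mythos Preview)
Your proof is correct. The upper-bound construction is essentially identical to the paper's (the regular interior grid of $k^d$ points with $k=\lfloor n^{1/d}\rfloor$), with your explicit padding to $n$ points being a detail the paper leaves implicit.

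Your lower-bound argument, however, differs from the paper's and is in fact both simpler and sharper. The paper uses a volume/covering argument: it shrinks $U_d$ to a concentric cube of side $1-x$, places an open cube of side $x$ around each of the $n$ points, and solves $(1-x)^d = n x^d$ to find an uncovered center; this yields exactly the stated bound $A'_d(n)\ge 1/(n^{1/d}+1)^d$. Your pigeonhole tiling with $m=\lfloor n^{1/d}\rfloor+1$ cells per side gives the stronger $A'_d(n)\ge 1/(\lfloor n^{1/d}\rfloor+1)^d$, which combined with the matching upper bound pins down $A'_d(n)$ exactly---something the paper's argument does not quite achieve. The trade-off is that the paper's covering method is a bit more flexible (it adapts naturally to non-integer scales and to related packing questions), whereas your grid argument is tailored to the discrete structure of the problem but delivers the tight constant immediately.
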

\begin{proof}
We will prove the bounds for the planar case $d = 2$:
$$ \frac{1}{(\sqrt{n} + 1)^2} \leq A'(n) \leq
\frac{1}{(\lfloor \sqrt{n} \rfloor + 1)^2}. $$
The proof can be easily generalized for $d \geq 3$.

We first prove the lower bound. Let $S$ be a set of $n$ points in the unit
square $U$. Let $x$ be a positive number to be determined. Let $X$ be
an axis-parallel square of side $1 - x$ that is concentric with $U$. For
each point $p \in S$, place an axis-parallel (open) square of side $x$
centered at $p$. If there is a point $q \in X$ that is not covered by the
union of the $n$ squares, then the axis-parallel square of side $x$
centered at $q$ is an empty square contained in $U$.

The area of $X$ is $(1 - x)^2$. The total area of $n$ squares of side
$x$ is $n x^2$. Let $x$ be the solution to the following equation
$$ (1 - x)^2 = n x^2. $$
The solution is $x = \frac{1}{\sqrt{n} + 1} $. For this value of $x$, 
either the $n$ small squares cover $X$ with no interior overlap among
themselves, or there is interior overlap and they don't cover $X$. In
either case, there exists an open axis-parallel square of side length
$x$, centered at a point in $X$, and empty of points in $S$. Consequently,  
$$ A'(n) \geq x^2 = \frac{1}{(\sqrt{n} + 1)^2}. $$

We next prove the upper bound. Let $k = \lfloor \sqrt{n} \rfloor$. Note
that $n \ge k^2$. Partition the unit square $U$ into a $(k+1) \times (k+1)$
square grid of cell length $1/(k+1)$. Place a point at each of the $k^2$
grid vertices in the interior of $U$. Then any axis-parallel square
contained in $U$ whose side is longer than $1/(k+1)$, must be
non-empty. Consequently, 
$$ A'(n) \leq \frac{1}{(\lfloor \sqrt{n}\rfloor  + 1)^2}. $$

It remains to show that \eqref{E31} implies that for a fixed $d$, 
we have $A'_d(n) =\Theta\left(\frac{1}{n}\right)$.
The following inequalities are straightforward:
$$ n^{1/d} + 1 \leq 2 n^{1/d} \quad\Rightarrow\quad 
(n^{1/d} + 1)^d \leq 2^d n, $$
$$ n^{1/d} \leq \lfloor n^{1/d} \rfloor +1 \quad\Rightarrow\quad
n \leq (\lfloor n^{1/d} \rfloor + 1)^d. $$
Putting them together yields
$$ \frac{1}{2^d n} \leq \frac{1}{(n^{1/d} + 1)^d} \leq A'_d(n) \leq
\frac{1}{(\lfloor n^{1/d} \rfloor + 1)^d} \leq 
\frac{1}{n}, $$
as claimed.
\end{proof}

\section{A $(1-\eps)$-approximation algorithm for finding the largest
empty \\ hypercube}\label{sec:approx2}  

Let $R$ be an axis-parallel $d$-dimensional hypercube in $\RR^d$ 
containing $n$ points. In this section, we present an efficient
$(1-\eps)$-approximation algorithm for computing a maximum-volume
empty axis-parallel hypercube contained in $R$. 
\later{
Recall that 
with exact algorithms, a largest empty hypercube can be found faster
than a largest empty box~\cite{BK09a,BK09b}, as mentioned in the
introduction. In our case, with approximation algorithms, 
the situation is analogous, and we are able to obtain a faster
algorithm for finding the largest hypercube:
} % later

\begin{theorem} \label{T6}
Given an axis-parallel $d$-dimensional hypercube $R$ in $\RR^d$ containing
$n$ points, there is a $(1-\eps)$-approximation algorithm, running in  
$$ O\left( \frac{d^2}{\eps} \cdot n \log{n} +
\left( \frac{4d}{\eps} \right)^{d+1} \cdot n^{1/d} \log{n}  \right)
$$
time, for computing a maximum-volume empty axis-parallel hypercube 
contained in $R$. 
\end{theorem}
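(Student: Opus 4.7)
The plan is to adapt the framework of Theorem~\ref{T4}, taking advantage of two simplifications that are specific to hypercubes. First, Theorem~\ref{T5} gives the sharper bound of $1/(n^{1/d}+1)$ on the side length (not just the volume) of the maximum empty hypercube, so the number of canonical sizes we need to consider is only $k=O(\log n/\eps)$ rather than $O(d\log n/\eps)$. Second, a hypercube has a single size parameter, so the set of canonical shapes has cardinality $k$ rather than $k^d$. The parameters $\delta=\eps/(2d)$, $m=\lceil 1/\delta\rceil$, $a=1/(1-\delta)$ are chosen as in Theorem~\ref{T4}, with $k$ the least integer satisfying $a^{k-1}\leq n^{1/d}+1<a^k$, and the canonical side lengths form $\X=\{a^i/a^{k+1}:0\leq i\leq k-1\}$. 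Each canonical side $X$ has an associated canonical grid of spacing $X/m$.

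The approximation argument is essentially identical to that of Lemmas~\ref{L7} and~\ref{L9}: if $Z$ is the side length of the maximum empty hypercube $C$, and $X\in\X$ is the largest canonical side not exceeding $Z$, then a placement of an $X$-hypercube at a canonical grid position fits inside $C$, and $X^d\geq Z^d/a^{2d}\geq (1-\eps)Z^d$. What remains is to locate such an $X$ and such a placement efficiently across all choices of $X$.

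For the algorithmic step, a direct enumeration of all $O((m/X)^d)$ grid placements per canonical size is too expensive: for the smallest canonical side $X\approx 1/n^{1/d}$ this is already $\Theta(n)$ placements per size, and summing over $k$ sizes yields an $\Omega(n\log n/\eps)$ term with an extra factor of $(2d/\eps)^d$. Instead, the plan is to sweep only along one coordinate axis, say $x_d$, where the number of grid positions per canonical side is only $\lceil m/X\rceil$; summed via a geometric series this totals $O((d/\eps)^2 n^{1/d})$ over all $k$ canonical sides. At each sweep position I would test feasibility in the remaining $d-1$ coordinates by an orthogonal-range-emptiness query on the points currently inside the $x_d$-slab of width $X$, using a data structure built during preprocessing and updated incrementally during the sweep. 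The target per-position cost is $O(m^{d-1}\log n)=O((2d/\eps)^{d-1}\log n)$, which combined with the sweep count yields the second term $O((4d/\eps)^{d+1} n^{1/d}\log n)$. The preprocessing cost --- sorting points along each coordinate and building the range structures for the $k=O(\log n/\eps)$ canonical grids --- contributes the additive $O(d^2\eps^{-1} n\log n)$ term.

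The main obstacle will be engineering the per-position emptiness check so that its cost scales with the \emph{number of points} rather than the number of grid positions: a naive $(d-1)$-dimensional grid scan would cost $O((m/X)^{d-1})$, which can exceed $n$ for small $X$. The remedy is to maintain, along with the $x_d$-sorted list of points, an orthogonal range tree on the remaining $d-1$ coordinates restricted to the current slab, amortizing insert/delete cost across the sweep; alternatively, one can recurse on dimension, since the lower-dimensional subproblem has the same form. Once the per-query bound is in hand, fitting all the constants into the stated running time becomes a matter of careful accounting analogous to (but simpler than) that performed in Lemmas~\ref{L3}--\ref{L10}, using the estimates $a\leq 1+3\eps/(5d)$ and $a^k\leq (6/5)(n^{1/d}+1)$ from the parameter setup.
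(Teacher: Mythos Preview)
Your parameter setup and the approximation argument (the analogs of Lemmas~\ref{L7} and~\ref{L9}) match the paper, but your algorithmic step diverges substantially from it. The paper does \emph{not} sweep along an axis or use any orthogonal-range data structure; it simply reuses the grid-counting machinery of Lemmas~\ref{L3}--\ref{L6} from the box case verbatim. For each canonical hypercube $H_0$ and its associated grid $G_0$, it computes cell counts, then corner-box numbers, then tests every placement, at cost $O(dn+2^d M(G_0))$ per canonical size. The only change from the box case is the estimate on $a^k$: with $k$ now chosen so that $a^{k-1}\le n^{1/d}+1<a^k$, one has $a^k\le\tfrac{12}{5}\,n^{1/d}$, and substituting this into the Lemma~\ref{L10} computation (i.e., into $M(G_0)\le(m+1)^d\,a^{2d}\,a^k$) yields $M(G_0)\le 22\,(2d/\eps)^d\,n^{1/d}$. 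Multiplying by the bound $k\le(2d/\eps)\log n$ on the number of canonical sizes gives the two terms of the theorem directly. So your premise that ``direct enumeration of all $O((m/X)^d)$ grid placements per canonical size is too expensive'' is precisely where you and the paper part ways, and it is what pushes you toward a far more intricate scheme.

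That scheme, moreover, does not close as written. You target a per-sweep-position cost of $O(m^{d-1}\log n)=O((2d/\eps)^{d-1}\log n)$, but the task at each $x_d$-position is to locate an empty axis-parallel $(d{-}1)$-cube of side $X$ inside the current slab --- a $(d{-}1)$-dimensional instance of the same problem with $\Theta((m/X)^{d-1})$ candidate placements, not $\Theta(m^{d-1})$. Neither an orthogonal range tree over the slab points nor ``recursing on dimension'' converts $(m/X)^{d-1}$ into $m^{d-1}$; the grid resolution, not the point count, governs the number of placements that must be examined. You flag this yourself as ``the main obstacle,'' but the proposed remedies do not resolve it, so the running-time accounting does not go through.
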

\begin{proof} The overall structure of the algorithm is
similar to that for finding the largest empty box. 
We can assume w.l.o.g. that $R=U_d=[0,1]^d$, $n \geq 12$, and $d \geq 3$. 
Recall that, by Theorem~\ref{T5}, the volume of a largest empty
hypercube in $U_d$ is at least $(n^{1/d} + 1)^{-d}$.  
We set the parameters $\delta$, $m$ and $a$ as in equation \eqref{E14}. 
Inequalities \eqref{E12} and \eqref{E13} also follow.
Let now $k$ be the unique positive integer such that
\begin{equation} \label{E30}
a^{k-1} \leq n^{1/d} + 1 <a^k.
\end{equation} 
Thus 
$$ a^{k-1} \leq n^{1/d} + 1 \leq 2n^{1/d}. $$
Since $n \geq 12$ and $d \geq 3$ we have 
$$ k-1 \leq \frac{1+ \frac{1}{d} \log{n}}{\log a} \leq
\frac{(\frac{1}{3}+ \frac{1}{d}) \log{n}}{\log a} \leq
\frac{2}{3} \cdot \frac{\log{n}}{\log a} \leq 
\frac{2\log{n}}{3} \cdot \frac{\ln{2}}{0.9 \delta} \leq
\frac{3\log{n}}{5\delta}. $$
It follows that 
\begin{equation} \label{E27}
k \leq 1 + \frac{3\log{n}}{5\delta} \leq \frac{\log{n}}{\delta} =
\frac{2d}{\eps} \cdot \log{n}. 
\end{equation} 

Consider the set $\H$ of $k$ {\em canonical hypercubes} whose sides
are elements of $\X$ (as in \eqref{E18}):
\begin{equation} \label{E26}
\X=\left\{\frac{a^i}{a^{k+1}}, i=0,1,\ldots,k-1\right\}. 
\end{equation} 

For a given canonical hypercube $H_0 \in \H$, with side $X \in \X$,
consider the {\em canonical grid associated with $H_0$} 
with points of coordinates 
\begin{equation} \label{E19}
\left(\frac{i_1 X}{m}, \ldots, \frac{i_d X}{m}\right), \ \ 
i_1,\ldots, i_d \geq 0 
\end{equation} 
contained in $U_d$. 

Consider the set $\I$ of $k+1$ intervals (as in \eqref{E17}): 
\begin{equation} \label{E21}
\I=\left\{\left[\frac{a^i}{a^{k+1}}, \frac{a^{i+1}}{a^{k+1}}\right),
  i=0,1,\ldots,k\right\}. 
\end{equation} 

Let $H$ be a maximum-volume empty hypercube in $R=U_d$, with side
length $Z$ and $V_{\rm max}=\vol(H)$. Observe that $Z \geq \frac{a}{a^{k+1}}$: 
indeed, $Z <\frac{a}{a^{k+1}}$ would imply that 
$$ Z^d < \frac{1}{a^{kd}} < \frac{1}{(n^{1/d}+1)^d}, $$
in contradiction to the lower bound in Theorem~\ref{T5}. 
This means that $Z$ belongs to one of the last $k$ intervals in the
set $\I$. That is, there exists an integer $y \in \{0,1,\ldots,k-1\}$, such that 
\begin{equation} \label{E22}
Z \in \left[\frac{a^{y+1}}{a^{k+1}}, \frac{a^{y+2}}{a^{k+1}}\right). 
\end{equation} 

Analogous to Lemma~\ref{L7}, we conclude that $H$ contains a 
{\em large canonical hypercube}, say $H_0$, whose side is
\begin{equation} \label{E23}
X= \frac{a^{y}}{a^{k+1}}, 
\end{equation} 
at some position in the canonical grid associated with it. 
Analogous to Lemma~\ref{L9}, we show that 
$\vol(H_0) \geq (1-\eps) \cdot \vol(H)$:
By \eqref{E23} and \eqref{E22}, 
$$ \vol(H_0) = \left(\frac{a^{y}}{a^{k+1}}\right)^d=
\frac{1}{a^{2d}} \cdot \left(\frac{a^{y+2}}{a^{k+1}} \right)^d \geq
\frac{1}{a^{2d}} \cdot \vol(H) \geq (1-\eps) \cdot \vol(H),
$$
since the setting of $a$ is the same as before. 
Analogous to Lemma~\ref{L8}, now \eqref{E27} is the upper bound we
need on the number of canonical hypercubes. 
The bound in Lemma~\ref{L10} needs to be adjusted because $k$ is
chosen differently, and we have a different upper bound on the third
factor in the product, $a^k$. From the definition of $k$ in
\eqref{E30} and from \eqref{E13} we deduce 
$$ a^k =a \cdot a^{k-1} \leq a (n^{1/d} +1) \leq 
2a n^{1/d} \leq \frac{12}{5} n^{1/d}. $$
The resulting bound analogous to that in Lemma~\ref{L10} is now
\begin{equation} \label{E32}
M(G_0) \leq e^{11/5} \left( \frac{2d}{\eps} \right)^d \cdot 
\frac{12}{5} n^{1/d} \leq
22 \cdot \left( \frac{2d}{\eps} \right)^d \cdot n^{1/d}. 
\end{equation} 

The time taken to test $H_0$ for emptiness and containment in $R$ when
placed at all relevant grid positions is now
\begin{equation*} \label{E33}
O(d \cdot n + M(G_0) + 2^d \cdot M(G_0))=
O\left( dn + 2^d \cdot \left( \frac{2d}{\eps} \right)^d \cdot n^{1/d} \right)=
O\left( dn + \left( \frac{4d}{\eps} \right)^d \cdot n^{1/d} \right). 
\end{equation*} 

By multiplying this with the upper bound in~\eqref{E27}, 
on the number of canonical hypercubes, we get that the total
running time of the approximation algorithm is 
$$ O\left( \frac{d^2}{\eps} \cdot n \log{n} +
\left( \frac{4d}{\eps} \right)^{d+1} \cdot n^{1/d} \log{n}  \right). 
$$
The proof of Theorem~\ref{T6} is now complete.
\end{proof}

\section{An asymptotically tight bound on the number of restricted
boxes}\label{sec:restricted} 

In this section we prove the following theorem:

\begin{theorem}\label{thm:restricted}
Let $U_d$ be the unit hypercube $[0,1]^d$.
For any $n > 0$, there exist $n$ points in $U_d$
such that the number of restricted boxes in $U_d$
is at least
$(\lfloor \frac{n}{d} \rfloor + 1)^d$.
On the other hand,
the number of restricted boxes determined by any set of $n$ points in $U_d$
is at most
${n \choose d} \cdot {2d \choose d}$.
\end{theorem}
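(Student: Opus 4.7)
The plan is to establish the two bounds independently.

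\textbf{Lower bound.} My plan is to generalize the two-dimensional construction of Naamad, Lee, and Hsu and the three-dimensional construction of Datta and Soundaralakshmi. Setting $k = \lfloor n/d \rfloor$, I will partition (up to $dk$ of) the points into $d$ families of $k$ points each: in the $i$-th family, the $i$-th coordinate takes the canonical values $j/(k+1)$ for $j = 1, \ldots, k$, while the remaining $d-1$ coordinates are placed on a short staircase with small, pairwise distinct perturbations that rule out degenerate coincidences. For each tuple $(j_1, \ldots, j_d) \in \{0, 1, \ldots, k\}^d$, I will exhibit a distinct restricted box whose $i$-th extent begins at $j_i/(k+1)$ (or at $0$ if $j_i = 0$) and ends at $(j_i+1)/(k+1)$ (or at $1$ if $j_i = k$); each of its faces will be blocked either by the boundary of $U_d$ or by a point from an appropriate family, yielding the desired $(k+1)^d$ restricted boxes.

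\textbf{Upper bound.} My plan is a charging argument. Fix a restricted box $B = \prod_{i=1}^d (a_i, b_i)$. By maximality, each of its $2d$ faces either lies on the boundary of $U_d$ or contains a point of $S$ in its relative interior. I will assign to $B$ a canonical pair $(T, S_0)$, where $T$ is a $d$-subset of the $2d$ face labels $\{(i, \mathrm{lo}), (i, \mathrm{hi}) : 1 \leq i \leq d\}$ and $S_0 \subseteq S$ is a $d$-element set of witness points, with an implicit rule matching each face in $T$ to its witness in $S_0$ via coordinate ordering. The $d$ face-point incidences fix $d$ of the $2d$ coordinate values of $B$; maximality then forces the remaining $d$ coordinates to be the largest values keeping $B$ empty. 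Verifying that this recovery is unique will give an injection from restricted boxes into the $\binom{2d}{d} \binom{n}{d}$ pairs $(T, S_0)$, giving the claimed upper bound.

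\textbf{Main obstacle.} The crux is the injectivity of the upper-bound encoding, and in particular handling degenerate restricted boxes that touch the boundary of $U_d$ on more than $d$ of their faces, so that fewer than $d$ witness points from $S$ are available. Such boxes are essentially constrained to a lower-dimensional coordinate subspace and can be absorbed into the $\binom{2d}{d}$ factor by using ``boundary slots'' in place of missing witnesses. The reconstruction from $(T, S_0)$ will rest on the principle that, among all empty boxes respecting the prescribed face-witness incidences, the maximal one is uniquely determined by extending each unconstrained face outward until it meets a point of $S$ or the boundary of $U_d$. For the lower-bound construction, the corresponding subtlety is to tune the staircase perturbations so that each of the $(k+1)^d$ candidate boxes is genuinely maximal; I expect this to follow from arranging each family so that its members block precisely the faces perpendicular to their own axis in the neighboring candidate boxes.
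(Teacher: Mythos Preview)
Both halves of your plan have genuine gaps, not just details to be filled in.

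\textbf{Lower bound.} Your candidate restricted boxes are the grid cells $\prod_i \bigl(j_i/(k+1),(j_i+1)/(k+1)\bigr)$. For the lower $x_i$-face of such a cell (with $j_i\ge 1$) to be blocked, you need a point with $i$th coordinate exactly $j_i/(k+1)$ whose remaining $d-1$ coordinates lie in that particular cell. Family $i$ supplies exactly one point with that $i$th coordinate, and its other coordinates are fixed once and for all; no tuning of a ``short staircase'' can place this single point simultaneously inside all $(k+1)^{d-1}$ cells sharing that face. So almost none of your grid cells are restricted, and the construction cannot produce $(k+1)^d$ maximal boxes. The paper's construction is qualitatively different: the $2d$ signed directions are paired cyclically as $\{+\vec x_1,-\vec x_2\},\{+\vec x_2,-\vec x_3\},\ldots,\{+\vec x_d,-\vec x_1\}$, and the $i$th sequence of points lies entirely in the $(x_i,x_{i+1})$-plane, with \emph{all other coordinates equal to~$0$}. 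Every candidate box contains the origin in those other coordinates, so each point from sequence~$i$ blocks the appropriate face of \emph{every} box built from its pair of consecutive points, regardless of the choices made in the other sequences. That simultaneous-blocking property is precisely what your per-axis families lack.

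\textbf{Upper bound.} Your reconstruction asserts that ``among all empty boxes respecting the prescribed face--witness incidences, the maximal one is uniquely determined by extending each unconstrained face outward.'' This is false once $d\ge 3$: if, say, $T$ fixes $a_1,b_1,a_2$, then determining the remaining three faces $b_2,a_3,b_3$ amounts to finding a maximal empty box in a lower-dimensional subproblem containing the three witness points, and such maximal boxes are not unique (the order in which you push the free faces outward matters, and different orders give different restricted boxes). Hence $B\mapsto (T,S_0)$ is not injective as you describe it. The paper obtains injectivity via the Backer--Keil deflation--inflation device: the $2d$ faces are processed in a \emph{fixed order}, $d$ of them are deflated to reach a canonical smaller box $B'$ determined by $d$ points of $S$, and $B$ is then recovered by inflating those same $d$ faces \emph{in the reverse order}, each step being forced. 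The sequential, order-dependent coupling between the recorded faces and the recovery procedure is exactly what your charging argument is missing.
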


We prove the lower bound in Theorem~\ref{thm:restricted} by construction.
We will use the following lemma:

\begin{lemma}\label{lem:construction}
Let $n = \sum_{i=1}^d n_i$, where $n_i \ge 2$, $1 \le i \le d$.
Then there exist $n$ points in $\RR^d$
such that the number of maximal empty axis-parallel boxes in $\RR^d$
is at least $\prod_{i=1}^d (n_i - 1)$.
\end{lemma}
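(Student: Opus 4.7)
The plan is to prove Lemma~\ref{lem:construction} by an explicit construction. For each direction $i \in \{1,\ldots,d\}$, I would choose strictly increasing reals $t_{i,1} < \cdots < t_{i,n_i}$ and place $n_i$ points $p_{i,1}, \ldots, p_{i,n_i}$ collinearly along a line parallel to the $i$-th coordinate axis; specifically, $p_{i,j}$ has $i$-th coordinate $t_{i,j}$ and its remaining $d-1$ coordinates fixed to a common offset vector $\alpha^{(i)} \in \RR^{d-1}$. This uses a total of $n = n_1 + \cdots + n_d$ points. The offsets $\alpha^{(1)}, \ldots, \alpha^{(d)}$ will be coordinated to guarantee the maximality of the candidate boxes defined below.

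For each tuple $\mathbf{j} = (j_1, \ldots, j_d)$ with $j_i \in \{1,\ldots,n_i-1\}$, I associate the axis-parallel box $B_{\mathbf{j}}$ whose extent in direction $i$ is $(t_{i,j_i}, t_{i,j_i+1})$ and whose extent in each transverse direction $k$ is a bounded interval containing the $k$-th components of the offsets $\alpha^{(i)}$ for $i \neq k$. The core verification consists of three steps. \emph{Emptiness}: no point $p_{i,j}$ lies in the interior of $B_{\mathbf{j}}$, since $t_{i,j} \in (t_{i,j_i}, t_{i,j_i+1})$ is ruled out by strict monotonicity. \emph{Maximality}: each of the $2d$ facets of $B_{\mathbf{j}}$ is blocked by a point of the set, with the two direction-$i$ facets blocked by $p_{i,j_i}$ and $p_{i,j_i+1}$, and each transverse facet in direction $k$ blocked by a point of some group $i \neq k$ whose offset $\alpha^{(i)}_k$ coincides with the facet position. \emph{Distinctness}: the extent of $B_{\mathbf{j}}$ in direction $i$ determines $j_i$, so the map $\mathbf{j} \mapsto B_{\mathbf{j}}$ is injective. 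This yields $\prod_{i=1}^d(n_i-1)$ distinct maximal empty boxes in $\RR^d$.

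The main obstacle is the coordination of the offsets required for maximality. Each blocker $p_{i,j_i}$ or $p_{i,j_i+1}$ must have its transverse coordinates within the corresponding extents of $B_{\mathbf{j}}$, so $\alpha^{(i)}_k$ must lie inside the direction-$k$ extent of every $B_{\mathbf{j}}$; at the same time, the direction-$k$ extent must be terminated by a facet coinciding with some $\alpha^{(i^*)}_k$ to serve as a blocker from group $i^* \neq k$. I would resolve these competing constraints by taking the components $\alpha^{(i)}_k$ (with $i \neq k$) to be pairwise distinct and defining the direction-$k$ extent of every $B_{\mathbf{j}}$ to be the minimal closed interval containing all such values; then each transverse facet coincides with an extremal $\alpha^{(i^*)}_k$ and is automatically blocked by a point of group $i^*$, while the strict monotonicity of the $t_{i^*,\cdot}$'s forces that blocker to be precisely $p_{i^*,j_{i^*}}$ or $p_{i^*,j_{i^*}+1}$. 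With this uniform choice, all three verifications succeed simultaneously for every tuple $\mathbf{j}$, and the count of $\prod_{i=1}^d(n_i-1)$ maximal empty boxes follows.
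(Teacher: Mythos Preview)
Your proposal has a genuine gap. Since there is one group for each of the $d$ coordinate directions, every direction is ``primary'' for exactly one group and there are no leftover ``transverse'' directions; the extent of $B_{\mathbf{j}}$ in direction $k$ is $(t_{k,j_k},t_{k,j_k+1})$, full stop. For the point $p_{i,j_i}$ to block the lower $i$-facet of $B_{\mathbf{j}}$, its $k$-th coordinate $\alpha^{(i)}_k$ must lie in the closed interval $[t_{k,j_k},t_{k,j_k+1}]$ for every $k\neq i$. But $\alpha^{(i)}_k$ is a single fixed number, while as $j_k$ ranges over $\{1,\ldots,n_k-1\}$ the intervals $[t_{k,j_k},t_{k,j_k+1}]$ are consecutive and meet only at endpoints; hence the required containment holds for at most two values of $j_k$, and already for $n_k\ge 3$ some tuples $\mathbf{j}$ yield boxes whose $i$-facets are not blocked. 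Your final paragraph attempts a repair by redefining the direction-$k$ extent of every $B_{\mathbf{j}}$ to be the fixed interval spanned by the offsets $\{\alpha^{(i)}_k:i\neq k\}$, but this contradicts your earlier definition $(t_{k,j_k},t_{k,j_k+1})$ and, applied to every $k$, makes the box independent of $\mathbf{j}$ altogether, collapsing the product count.

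The paper's construction sidesteps this obstacle by placing each group as a two-dimensional staircase rather than on a line: group $i$ sits in the $(x_i,x_{i+1})$-plane (indices modulo $d$) with strictly monotone coordinates in both directions, so that each consecutive pair in group $i$ simultaneously furnishes the upper facet in direction $x_i$ and the lower facet in direction $x_{i+1}$. The $d$ groups then cyclically cover all $2d$ facet directions, and the signs are arranged so that every point of every other group automatically lies inside both half-spaces contributed by group $i$. This coupling of two facet directions per group is the idea missing from your plan; a collinear group can constrain only the two facets orthogonal to its own axis, and its fixed transverse position cannot track the varying extents in the remaining directions.
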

\begin{proof}
Let $\pm\vec x_1, \ldots, \pm\vec x_d$
be the positive and negative unit vectors along the $d$ axes of $\RR^d$.
Partition these $2d$ vectors into $d$ groups of orthogonal vectors,
$$
\{+\vec x_1, -\vec x_2\}, \{+\vec x_2, -\vec x_3\},
	\ldots, \{+\vec x_{d-1}, -\vec x_d\}, \{+\vec x_d, -\vec x_1\},
$$
with one positive vector and one negative vector in each group.
Then, for each group of two orthogonal vectors, say $\{+\vec x_i, -\vec x_j\}$,
place a sequence of $n_i$ points in $\RR^d$ as
$$
k \vec x_i - (n_i + 1 - k) \vec x_j, \quad 1 \le k \le n_i,
$$
where each pair of consecutive points in the sequence,
say
$$(a - 1) \vec x_i - b \vec x_j
\quad\textup{and}\quad
a \vec x_i - (b - 1) \vec x_j,
$$
corresponds to a pair of open half-spaces
$$
x_i < a
\quad\textup{and}\quad
x_j > -b.
$$
\old{
The intersection of the two half-spaces
contains all points in the other sequences but no points in this sequence,
and is bounded by the two points
in the two directions $+\vec x_i$ and $-\vec x_j$.
There are $\prod_{i=1}^d (n_i - 1)$ combinations of
$d$ pairs of consecutive points,
one pair from each sequence.
For each combination,
the intersection of the corresponding $d$ pairs of half-spaces
is a unique maximal empty axis-parallel box.
}% old
 
Consider the pair of open half-spaces $x_i < a$ and $x_j > -b$
corresponding to the pair of consecutive points in the sequence for
the group $\{ +\vec x_i, -\vec x_j \}$.
Since the points in the sequence have monotonic $x_i$ and $x_j$ coordinates,
we have property (i) that
the intersection of the two half-spaces contains no points in the sequence,
and property (ii) that
each of the two points is on the boundary of one half-space
and is in the interior of the other half-space.
Moreover,
since the $x_i$ and $x_j$ coordinates of the points in the other sequences
are either zero or different in sign from the points in this sequence,
we have (iii) that
each of the two half-spaces contains all points in the other sequences.
There are $\prod_{i=1}^d (n_i - 1)$ combinations of $d$ pairs of consecutive
points, one pair from each sequence.
Consider the intersection $R_d$ of the $d$ pairs of half-spaces
corresponding to any of these combinations.
By (i), the intersection $R_d$ must be empty.
By (ii) and (iii),
there is a point in the interior of each bounding face,
thus the intersection box $R_d$ must be maximal.
Hence for each combination,
the intersection of the corresponding $d$ pairs of half-spaces
is a unique maximal empty axis-parallel box.
\begin{figure}[htb]
\psfrag{x}{$x$}
\psfrag{y}{$y$}
\centering\includegraphics[width=0.33\linewidth]{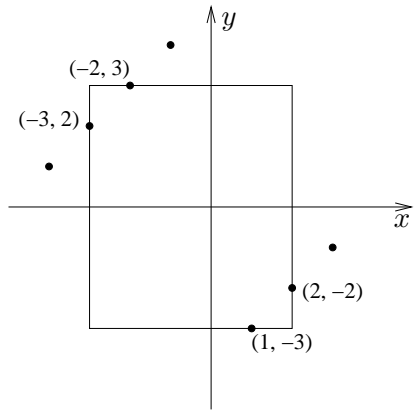}
\caption{An example of the construction.}
\label{fig:construction}
\end{figure}

We refer to Fig.~\ref{fig:construction} for an example of the planar case.
For $n_1 = 3$, $n_2 = 4$, and $n = 7$,
the four unit vectors $\pm\vec x$ and $\pm\vec y$ are grouped into
$\{+\vec x, -\vec y\}$ and $\{+\vec y, -\vec x\}$.
The corresponding two sequences of points have the following
$(x,y)$-coordinates:  
\begin{gather*}
(1, -3) \quad (2, -2) \quad (3, -1)
\\
(-4, 1) \quad (-3, 2) \quad (-2, 3) \quad (-1, 4).
\end{gather*}
Then the following two pairs of consecutive points
\begin{gather*}
(1, -3) \quad (2, -2)
\\
(-3, 2) \quad (-2, 3)
\end{gather*}
correspond to the following two pairs of half-planes:
\begin{gather*}
x < 2
\quad\textup{and}\quad
y > -3
\\
y < 3
\quad\textup{and}\quad
x > -3
\end{gather*}
whose intersection is the maximal empty box $(-3, 2) \times (-3, 3)$.
\end{proof}

By scaling and translation, the $n$ points in Lemma~\ref{lem:construction}
can be placed in the unit hypercube $U_d = [0,1]^d$ such that
the number of restricted boxes inside $U_d$ is at least
$\prod_{i=1}^d (\lfloor \frac{n}{d} \rfloor + 1)
= (\lfloor \frac{n}{d} \rfloor + 1)^d$,
where the change from $-1$ to $+1$ in the product
accounts for the two bounding faces of the unit hypercube
perpendicular to each axis.
This proves the lower bound. The same lower bound was obtained
independently and simultaneously by Backer and Keil~\cite{Ba09,BK09a,BK09b}.

To prove the upper bound in Theorem~\ref{thm:restricted},
we borrow the deflation-inflation idea of Backer and Keil~\cite{BK09a,BK09b}.
Assume for simplicity that
the points have distinct coordinates along each axis
(it is possible to perturb the points symbolically so this condition holds).
Let $B$ be an arbitrary restricted box.
Consider the $2d$ faces of the box in any fixed order.
If a face contains a point in its interior,
deflate the box by pushing the face toward its opposite face
until it contains a point on its boundary.
After $d$ such deflations,
we obtain an empty box $B'\subset B$ that is the smallest box containing
exactly $d$ points on its boundary.
To recover the original box $B$ from $B'$,
it suffices to inflate the box at the $d$ faces in reverse order,
by pushing each face away from its opposite face until
it contains a point in its interior.
Therefore the number of restricted boxes $B$ is at most
the number of deflated boxes $B'$ times the number of subsets of $d$
deflated faces, that is, ${n \choose d} \cdot {2d \choose d}$.
Since 
${n \choose d} \le n^d / d!$
and
${2d \choose d} = (2d)! / (d!)^2$,
we have
$$
{n \choose d} \cdot {2d \choose d}
\le n^d\frac{(2d)!}{(d!)^3}.
$$
By Stirling's formula, $d! = \sqrt{2\pi d} (d/e)^d (1 + O(1/d))$,
hence 
$$
\frac{(2d)!}{(d!)^3}
= \frac{\sqrt{2\pi 2d} (2d/e)^{2d}}{(\sqrt{2\pi d} (d/e)^d)^3}
	\big(1 \pm O(1/d)\big)
= \frac{(4e/d)^d}{\sqrt2 \pi d}
	\big(1 \pm O(1/d)\big).
$$
Thus
$$
{n \choose d} \cdot {2d \choose d}
\le n^d \frac{(4e/d)^d}{\sqrt2 \pi d}
	\big(1 \pm O(1/d)\big).
$$

Our upper bound is sharper (with respect to the dependence on $d$)
than the upper bound of $O(n^d) \cdot 2^{2d}$ by Backer and 
Keil~\cite{BK09a,BK09b}. The ratio of our upper bound to the lower bound is
$$
f(n,d)
= \frac{{n \choose d} \cdot {2d \choose d}}{(\lfloor \frac{n}{d} \rfloor + 1)^d}
\le \frac{n^d \frac{(4e/d)^d}{\sqrt2 \pi d} \big(1 \pm O(1/d)\big)}{(n/d)^d}
= \frac{(4e)^d}{\sqrt2 \pi d} \big(1 \pm O(1/d)\big)
= O\big((4e)^d/d\big).
$$
In comparison, the ratio of their upper bound to the same lower bound is 
$$
g(n,d)
= \frac{O(n^d) \cdot 2^{2d}}{(\lfloor \frac{n}{d} \rfloor + 1)^d}
= O\left(\frac{(4n)^d}{(n/d)^d}\right)
= O\big((4d)^d\big).
$$

\end{document}